\numberwithin{equation}{section}
\numberwithin{figure}{section}
\numberwithin{table}{section}
\theoremstyle{plain}
\newtheorem{thm}{Theorem}
\theoremstyle{definition}
\newtheorem{dfn}{Definition}
\newcommand{\del}{\partial}
\newcommand{\pdiff}[2]{\frac{\del #1}{\del #2}}
\DeclareMathOperator{\tr}{tr}
\begin{document}

\title{Applying polynomial filtering to mass preconditioned Hybrid Monte Carlo}
\author[uofa]{Taylor Haar\corref{cor1}}
\ead{taylor.haar@adelaide.edu.au}
\author[uofa]{Waseem Kamleh}
\author[uofa]{James Zanotti}
\author[yoshi]{Yoshifumi Nakamura}

\address[uofa]{CSSM, Department of Physics, The University of Adelaide, Adelaide, SA, Australia 5005}
\address[yoshi]{RIKEN Advanced Institute for Computational Science, Kobe, Hyogo 650-0047, Japan}

\cortext[cor1]{Corresponding author}

\begin{abstract}
The use of mass preconditioning or Hasenbusch filtering in modern
Hybrid Monte Carlo simulations is common. At light quark masses,
multiple filters (three or more) are typically used to reduce the cost
of generating dynamical gauge fields; however, the task of tuning a
large number of Hasenbusch mass terms is non-trivial. The use of short
polynomial approximations to the inverse has been shown to provide an
effective UV filter for HMC simulations. In this work we investigate
the application of polynomial filtering to the mass preconditioned
Hybrid Monte Carlo algorithm as a means of introducing many time scales
into the molecular dynamics integration with a simplified parameter
tuning process. A generalized multi-scale integration scheme that
permits arbitrary step-sizes and can be applied to Omelyan-style
integrators is also introduced. We find that polynomial-filtered
mass-preconditioning (PF-MP) performs as well as or better than
standard mass preconditioning, with significantly less fine tuning required.
\end{abstract}

\begin{keyword}
11.38.Gc \sep Hybrid Monte Carlo algorithm \sep Multiple time scale integration
\end{keyword}

\maketitle

\thispagestyle{fancy}
\renewcommand{\headrule}{}
\rhead{ADP-16-32/T988}
\lfoot{\vspace{20pt} \textit{\footnotesize \copyright\ 2017. Distributed under the CC-BY-NC-ND 4.0 license \url{http://creativecommons.org/licenses/by-nc-nd/4.0/.}}}

\section{Introduction}

The steady advance in computing power and algorithmic techniques has
enabled lattice QCD simulations to be performed at physical quark
masses. Generating configurations at or near the physical point
provides a significant computational challenge, and the corresponding
need for larger lattice volumes means that these simulations require
the use of Petascale computing facilities. Furthermore, the complexity
of the algorithms used to generate dynamical gauge fields has also
increased, with a corresponding increase in the effort required to
tune the associated parameters. This motivates investigations into
improving algorithmic efficiency and streamlining the tuning process
for lattice QCD configuration generation.

The algorithm of choice for generating gauge fields with dynamical
quarks is Hybrid Monte Carlo (HMC) \cite{Duane:1987}. However, vanilla
HMC suffers from critical slowing down: moving to smaller quark masses $m$ 
results in a dramatic increase in the condition number of the Dirac matrix.
This leads to a corresponding increase in computational cost ---
fits to the cost against quark mass $m$ \cite{Ukawa:2002} suggest
a $m^{-3}$ dependence.
This critical slowing
down makes physical point simulations with the vanilla HMC algorithm
infeasible.

This has led to the development of a large variety of
algorithmic improvements for HMC which are actively used in simulations.
The performance at lighter quark masses is improved by techniques such as
Hasenbusch mass preconditioning
\cite{Hasenbusch:2001ne}, polynomial filtering \cite{Kamleh:2011dc},
domain decomposition \cite{Luscher:2004},
and rational HMC \cite{Clark:2006}.
These improvement techniques modify the fermion action to either
decrease the condition number of the fermion matrix or
increase the stability of the inversion, such that coarser
step-sizes can be used and hence the computational cost can be reduced.
Such techniques can be further improved by modifying the HMC integration scheme,
namely using multiple time-scales \cite{Sexton:1992}
and higher-order integrators \cite{Omelyan:2003}.
In particular, the use of one or more filtering terms to
perform frequency-splitting and to break up the fermionic determinant
into multiple time-scales is critical for light quark mass simulations.

Simulations that include two degenerate quark flavours with a mass
approaching or at the physical value typically make use of a hierarchy
of two, three or more mass preconditioning terms~\cite{Urbach:2005ji, Aoki:2009ix, Bruno:2014jqa, Arthur:2012yc}.
However, it is challenging to simultaneously tune a large number of Hasenbusch mass parameters.
In contrast, the parameter tuning
required for multiple polynomial filters is relatively
simple~\cite{Kamleh:2011dc}. This work investigates the effects of
combining polynomial filtering with mass preconditioning (Hasenbusch
filtering), in an effort to obtain similar or better performance
with a simpler tuning process when compared to plain Hasenbusch
filtering.
As is the norm \cite{Urbach:2005ji, Kamleh:2011dc,Bruno:2014jqa, Arthur:2012yc,Aoki:2009ix,Peardon:2002wb,AliKhan:2003mc,BMW:2014}, each
action term is placed on a different time-scale according to its
respective force in order to minimize the overall cost. We use the
number of fermion matrix-vector multiplications as a
platform-independent benchmark for comparison.

This paper has two main sections.  In section \ref{sec:method}, we
describe the Hasenbusch and polynomial filtering techniques, how the
application of multiple time-scales leads to improved performance, and
a procedure for tuning the large number of resultant parameters.
Section \ref{sec:results} gives an analysis of the performance of
Hasenbusch filtering, polynomial filtering, and polynomial-filtered
mass-preconditioning.

The appendices outline some of the more technical details.
\ref{app:chebypoly} describes the choice of polynomial used for the
polynomial filtering.
In \ref{app:genint}, we construct a generalized multiple time-scale integration scheme that permits an arbitrary choice of step-size, and show that it is area preserving and time reversible as
required.
Finally, \ref{app:force_terms} derives the molecular dynamics force terms for
the two filtering methods under consideration.  

\section{Method} \label{sec:method}
\tikzsetfigurename{figure_2.}

\subsection{HMC} 
The method of choice for including dynamical fermions in lattice QCD simulations is HMC~\cite{Duane:1987},
where successive gauge configurations $U$ are generated by introducing a fictitious conjugate momentum field $P$ then preserving the Hamiltonian
\begin{equation*}
	H[P,U] = \sum \tr[P^2] + S[U]
\end{equation*}
via Hamilton's equations, giving integration steps
\begin{align*}
	\hat{T}[\epsilon]: (P,U) &\rightarrow (P, e^{i\epsilon P}U), \\
\quad \hat{S}[\epsilon]: (P,U) &\rightarrow (P - \epsilon F, U).
\end{align*}
Here, $S = S_G + S_F$ is the Euclidean action, $\epsilon$ is a step-size and $F = \pdiff{S}{U}$ is the force term. 
We use a sequence of these steps, typically of unit length in simulation time, to evolve the system from the state $(P,U)$ to $(P',U')$; this is known as a molecular dynamics trajectory.
The resultant gauge configuration $U'$ then undergoes a Metropolis acceptance step, with acceptance probability
\begin{equation}
	P_{\mathrm{acc}} = \min \left[ 1, \exp(H[P,U] - H[P',U']) \right].
\end{equation}

The main computational cost for HMC is in calculating the force term $F = \pdiff{S}{U}$. If we consider the basic 2-flavour fermion action with pseudo-fermion field $\phi$ and Dirac matrix $M$,
\begin{equation}
S_F = \phi^\dag (M^\dag M)^{-1} \phi \equiv \phi^\dag K^{-1} \phi,
\end{equation}
the fermionic force term takes the form
\begin{equation}
	F = \pdiff{S_F}{U} = - \phi^\dag K^{-1} \pdiff{K}{U} K^{-1} \phi.
\end{equation}
The costly operation here is calculating $K^{-1}\phi$, i.e. solving
$\phi = K \chi$ for $\chi$.
Due to the size of $K$, we invert using
iterative sparse matrix techniques such as conjugate gradient or other
Krylov-space methods.
As we go to smaller quark masses, the condition
number of $K$ increases and so the inversion requires more matrix
multiplications.
At the same time, the size and variance of the force
term $F$ increases, which then requires a reduction in the integration
step size to maintain a reasonable acceptance rate; this
increases the frequency with which the expensive matrix inverse
evaluations must be performed.
For these reasons, filtering techniques
that reduce the frequency of costly matrix inversions are essential at
light quark masses.

\subsection{Filtering methods} \label{sec:filtering_methods}

By noting that
\begin{equation}
	\det K = \frac{\det [LK]}{\det L}
\end{equation}
for any matrices $K$, $L$ invertible, we can separate our fermion action into multiple terms,
\begin{equation}
	S_{\mathrm{filtered}} = \phi_1^\dag L \phi_1 + \phi_2^\dag [LK]^{-1} \phi_2. \label{eq:filter}
\end{equation}
Since $L$ acts as a filter for the fermion matrix $K$, we call methods that
use \eqref{eq:filter} filtering methods.
The aim of filtering is to reformulate the fermion action in such a way that
the partitioned terms form an approximation to the determinant that is easier
to calculate (e.g. by reducing the stochastic noise).
Typically, the success of a filtering method in reducing the computational cost 
of a simulation requires that the force $F_1$ associated with the `filter term'
$\phi_1^\dag L \phi_1$ is relatively cheap to evaluate, and that the filter
provides a reduction in the size of force $F_2$ for the
expensive `correction term' $\phi_2^\dag [LK]^{-1} \phi_2$.

Mass preconditioning~\cite{Hasenbusch:2001ne} (also known as Hasenbusch 
preconditioning) is the predominant filtering method used in modern lattice
simulations for two degenerate quark flavours, and takes the form
\begin{equation}
	S_{\mathrm{MP}} = \phi_1^\dag J^{-1} \phi_1 + \phi_2^\dag JK^{-1} \phi_2, \label{eq:hasenbusch}
\end{equation}
where $J = W^\dag W$ and $W$ is a fermion matrix like $M$ but with a
modified mass parameter $m' > m$ for a `heavier' fermion.
This choice ensures that both $J$ and $J^{-1} K$ have condition numbers lower 
than $K,$ resulting in a less noisy approximation to the fermion determinant
and a corresponding reduction of the simulation cost~\cite{Hasenbusch:2001ne}.

An alternative choice of filter is a polynomial $L = P(K)$ of small
order $p$ that approximates the inverse $K^{-1}$, giving fermion
action
\begin{equation}
	S_{\mathrm{PF}} = \phi_1^\dag P(K) \phi_1 + \phi_2^\dag [P(K)K]^{-1} \phi_2. \label{eq:1poly}
\end{equation}
This is known as polynomial-filtered HMC~\cite{Kamleh:2011dc}. The
motivation here is that the polynomial term's force $F_1$ is very easy
to calculate due to a lack of inverses, and the condition number of
the correction term $S_2$ is reduced as $P(K)K \sim I$. More details
on the construction of the corresponding force terms can be found in
\ref{app:force_terms}.

This technique can be easily extended to two polynomial filters.
If we choose two polynomials $P_1(K)$ and $P_2(K)$ that
approximate the inverse such that $Q(K) = P_2(K)/P_1(K)$ is also a polynomial, with order $q = p_2 - p_1$, then we can construct the 2-filter action
\begin{equation}
	S_{\mathrm{2PF}} = \phi_1^\dag P_1(K) \phi_1 + \phi_2^\dag Q(K) \phi_2 + \phi_3^\dag [P_2(K)K]^{-1} \phi_3. \label{eq:2poly}
\end{equation}
Choosing the polynomials in this way ensures that the force for the intermediate term $F_2$ is easy to calculate.

The type of polynomial used in this paper is a Chebyshev approximation $P_p(z)
\approx 1/z$ of order $p$, which is parametrized by the real numbers
$\mu$ and $\nu$.  These two parameters are easily chosen such that the
net force is minimized whilst the approximation still encompasses
$K$'s eigenvalues; details on this optimization procedure are given in
\ref{app:chebypoly}.
This leaves only the integer parameter $p$ to
`tune'. This compares favourably with mass preconditioning, which has the \emph{real}
parameter $m'$ to tune.

\subsection{Multi-scale integrators} \label{sec:multi-scale_int}

The primary computational benefit from applying one or more filters
via \eqref{eq:filter} to the fermion action arises through the ability
to use a multiple time-scale integrator~\cite{Sexton:1992}, which
allows for the evolution of each term on a separate scale.

In order to take advantage of a multiple time-scale integrator,
we perform frequency splitting to divide the action into a UV-term and an IR-term $S = S_{UV} + S_{IR}$~\cite{Peardon:2002wb} where
\begin{itemize}
\item $S_{UV}$ captures the high-frequency modes of the system (i.e.\ large forces) whilst $S_{IR}$ captures the low-frequency modes (small forces).
\item $F_{UV}$ is significantly cheaper to calculate than $F_{IR}$.
\end{itemize}

The first condition allows one to place the expensive $S_{IR}$ term on
a coarser evolution scale without instabilities because of the reduced
forces, whilst the second condition means one can place $S_{UV}$ on a
finer time-scale with minimal increase in cost.  The net
effect is to reduce the overall computational cost with minimal loss
in acceptance rate.

A good candidate for this technique is polynomial filtering
\eqref{eq:1poly}: the polynomial term $\phi^\dag P(K) \phi$ captures
the high energy modes whilst producing a very cheap force, and can
hence act as the UV filter $S_{UV}$.  The preconditioner term
$\phi^\dag J^{-1} \phi$ in mass preconditioning \eqref{eq:hasenbusch}
works in a similar way; however, there is less direct control over
the cost as this depends on the mass $m'$ and typically requires
tuning.

This UV/IR prescription can be extended to as many terms as desired.
For example, as the gauge action $S_G$ is very cheap, it can be placed on a
very fine scale.  Hence, for the full 1-filter polynomial-filtered HMC
action
\begin{equation}
 S = S_G + \phi_1^\dag P(K) \phi_1 + \phi_2^\dag [KP(K)]^{-1} \phi_2,
\end{equation}
we choose step-sizes $h_G < h_1 < h_2$.

It is popular~\cite{Luscher:2004,Urbach:2005ji, Aoki:2009ix} to
choose step-sizes $h_i$ for each action term $S_i$ such that the
average forces $F_i$ are related via
\begin{equation}
	F_i h_i = \mathrm{constant}. \label{eq:force_tune}
\end{equation}
The motive behind this is that a term with a larger force causes correspondingly larger shifts in the Hamiltonian $H$, so a smaller step-size is required to balance the shifts between the terms and ensure numerical stability.
This choice does not necessarily give the optimal parameter set for minimizing the cost, but fine tuning the step-sizes $\{h_i\}$ can be prohibitively expensive in practice.

Large force variances can produce correspondingly large variances in the Hamiltonian $H$ if the step-size is too coarse, which results in low acceptance rates and even exceptional configurations.
In particular, experience indicates that the variance is important for the filtered pseudo-fermion correction term, where the size of the force is low but the variance is relatively large.
Given the large parameter space in this investigation, we choose
the conventional method based on balancing the size of the force
terms for simplicity, but here we examine the maximal forces $\tilde{F}_i$ and the step-sizes corresponding to $\tilde{F}_i h_i = \mathrm{constant}$.
This often yields a better acceptance rate (than the absolute value)
because it captures some aspects of the variance in the force distributions.

There are more sophisticated methods for step-size tuning.
For example, some groups tune the scales by `matching' the
tails of the force distributions.
Another possibility is to calculate Poisson brackets
in order to construct an optimizable approximation to the cost function via the shadow Hamiltonian~\cite{Clark:2011:PRD84}.
However, both these methods are inherently more complex to implement.

Most simulations use a nested leapfrog~\cite{Sexton:1992} or a
higher-order nested Om\-el\-yan integrator~\cite{Omelyan:2003} as the
multi-scale integrator, but this constrains each step-size to evenly
divide each coarser step-size.  It is possible to construct a
generalized multi-scale scheme where no such restrictions exist.  The
basic idea is to treat the `time' integration steps $T[\epsilon] =
(P,U) \rightarrow (P, e^{i\epsilon P}U)$ as advancing a time parameter
$\tau \rightarrow \tau + \epsilon$, then superimpose different
integration schemes for each action term in terms of $\tau$.  This
scheme is described in detail in \ref{app:genint}.

\subsection{Tuning in practice} \label{sec:tuning_method}

Each of the filtered actions has a wide range of parameters that can be tuned in order to minimize the computational cost.
For example, a good number of 2-flavour Wilson-like simulations use a Hasenbusch filter in their actions~\cite{Urbach:2005ji,Aoki:2009ix,BMW:2014}
\begin{IEEEeqnarray}{rCl}
	S & = & S_0 + S_1 + S_2 \nonumber \\
& = & S_G + \phi_1^\dag J^{-1} \phi_1 + \phi_2^\dag J K^{-1} \phi_2, \IEEEeqnarraynumspace
\end{IEEEeqnarray}
with each term integrated on a different time-scale. This provides four parameters to tune: $m'$, $h_0=h_G$, $h_1$ and $h_2$.
However, for physically interesting lattices, 
generating configurations takes a significantly long time, so the number of trajectories used to tune these parameters should be minimized.

The procedure used in this paper is as follows: first, `guess' some values for the mass preconditioning parameter $m'$ based on $m$.
For each choice, one determines the associated forces $\{F_G, F_1, F_2\}$ from a small number of trajectories,
then tunes the step-sizes such that $F_i h_i \approx \mathrm{constant}$.
Longer Markov chains are then performed in order to determine the acceptance rate $P_{\mathrm{acc}}$.
One then tunes the only free parameter, the coarsest step-size $h_2$, such that the desired acceptance rate is reached.

In the case of polynomial filtering, we first tune $\mu$ and  $\nu$ by minimizing the net fermion force as described in \ref{app:chebypoly}. Then we treat the polynomial order $p$ like $m'$ in the above procedure.
The advantage here is that a good choice of $p$ tends to work well for a wide range of target quark masses $m$, whereas a good choice of $m'$ depends strongly on $m$.

\subsection{Polynomial-filtered mass-preconditioning} \label{sec:pfmp}

At this point it is pertinent to make some remarks comparing the
relative efficacy of polynomial filtering and mass
preconditioning.

Mass preconditioning \eqref{eq:hasenbusch} works best when the difference
between the Hasenbusch mass and the target quark mass $\Delta m = m' - m$ is
small, as this implies that $J(m')K^{-1}(m) \simeq I$ and hence the
force term is correspondingly reduced. However, when $\Delta m$ and
hence $m'$ is made smaller, the inversion cost to evaluate
$J^{-1}\phi$ is increased. At light quark masses, a single Hasenbusch
filter is unable to simultaneously satisfy the criteria that the
filtered force term $F_2$ is reduced and the high frequency term $F_1$ is cheap to
evaluate. Due to this, to achieve a computationally efficient
frequency-splitting scheme, light quark mass simulations introduce
multiple mass preconditioning terms \cite{Urbach:2005ji, Bruno:2014jqa, Arthur:2012yc, Aoki:2009ix} that distribute the mass
differences across multiple Hasenbusch masses $m < m' < m'' < m'''
\ldots$.
As it is not possible to know \emph{a priori} the
inversion cost for a given term, this requires performing simulations
to tune the hierarchy of Hasenbusch mass parameters, which becomes
more labour-intensive as more scales are introduced.
Previous experience can help guide the choice of parameters,
but the extent to which this choice is optimal depends on the ensemble,
quark masses and gauge coupling being similar to a past run
or another published parameter set.

Meanwhile, the efficacy of polynomial filtering \eqref{eq:1poly} depends on two factors: 
the choice of the polynomial and the spectral range of the matrix whose inverse is being approximated.
Specifically, the smaller the spectral range of the matrix $K$, the smaller the order of the polynomial required to achieve a given accuracy.

In our case, we use a Chebyshev approximation $P(z) \simeq 1/z$ whose roots lie on an ellipse.
Choosing the parameters $(\mu,\nu)$ that determine the ellipse is straightforward:
one can simply evaluate the size of the force term while adjusting $(\mu,\nu)$ and look for a minimum.
In practice, one finds that the minimum is relatively shallow and hence fine-tuning of $(\mu,\nu)$ is not required once a reasonable pair of values has been found.

Once this process has been completed, the only remaining parameter to choose is $p$, the order of the polynomial approximation.
The choice of $p$ allows one to directly determine the cost of the
high frequency filter term. As $p$ must be an integer, there is no
fine-tuning.

Higher values of $p$ provide a greater reduction in the
force for the low frequency correction term $F_2$, but correspondingly
increase the cost for the filter term $F_1$. Hence it is beneficial
to make use of multiple polynomial filtering terms \eqref{eq:2poly} to introduce
additional frequency scales~\cite{Kamleh:2011dc}. An advantage of
polynomial filtering over mass-preconditioning is that the
introduction of an additional scales simply involves choosing another
(integer) polynomial order $q$ and hence does not require additional
fine-tuning.

Noting that if we had a polynomial of very high order we could
approximate the inverse exactly, we can consider the order of the
polynomial filter as a means of interpolating between the high and low
frequency scales.
The effectiveness of polynomial filtering is best in
the high frequency regime, associated with high energy scales. As we
move to lower frequency scales, the order of polynomial required to
capture the dynamics increases significantly and the Chebyshev
approximation becomes inefficient when compared with a Krylov-space
construction. On the other hand, at low frequency scales mass
preconditioning becomes more effective as $\Delta m$ becomes smaller
and hence $J(m')K(m)^{-1} \sim I.$

This observation leads us to propose applying a polynomial filter
(or several) to a mass preconditioned fermion action, giving
\begin{equation}
	S_{PF-MP} = \phi_1^\dag P(J) \phi_1 + \phi_2^\dag [JP(J)]^{-1} \phi_2 + \phi_3^\dag JK^{-1} \phi_3. \label{eq:pfmp_action}
\end{equation}
As $m' > m,$ the condition number and hence spectral range of $J(m')$
is reduced in comparison to that of $K(m),$ and hence the accuracy of the 
polynomial $P(J)$ is better than that of $P(K)$ at a fixed order. The use of
short polynomials then provides a good approximation to the high energy
fluctuations and is cheap to evaluate, simple to tune and provides
direct control over the cost of the highest filtering terms. As the
highest energy scales are filtered out using polynomials, the filtered
mass preconditioner $P^{-1}(J)J^{-1}$ can be placed on a coarse time
scale. Hence, the Hasenbusch mass parameter $m'$ can be chosen such
that $\Delta m$ is small to better reduce the force when evaluating
the mass pre\-conditioned quark mass term $JK^{-1}.$ The
combined algorithm, which we refer to as polynomial-filtered
mass-preconditioned HMC (PF-MP HMC) promises to provide the
computational benefit of multiple filters with simpler tuning in
comparison to plain mass preconditioning.

\section{Results} \label{sec:results}
\tikzsetfigurename{figure_3.}

\subsection{Simulation parameters}
To study the polynomial-filtered mass-preconditioned algorithm, we first
compare polynomial filtering (PF) and mass preconditioning (MP) separately to
provide a baseline, then we investigate several variants of the
combined PF-MP filtering scheme. We use a modified version
of the BQCD program~\cite{BQCD} to thermalize a small $16^3\times32$
lattice with $n_f=2$ Wilson fermions at $\kappa = 0.15825$, giving
pion mass $m_\pi \sim 400$~MeV.  The gauge coupling is $\beta = 5.6$,
providing a lattice spacing of $a \sim$ 0.08~fm~\cite{Urbach:2005ji}.
 This is thermalized
with 1000 trajectories of length $\tau = 1$, using two Hasenbusch
filters. The choice of integrator for all runs is the second-order
minimal norm integrator \eqref{eq:2MNSTS} under a generalized
multi-scale scheme (see \ref{app:genint}). See Table \ref{tab:config} for more parameters.
Note that we use the conjugate gradient algorithm to invert our fermion matrix: this works well with polynomial filtering, which benefits from the use of a multi-shift conjugate gradient algorithm (see \ref{app:force_terms}).
More advanced solvers are available and in use elsewhere \cite{Luscher:2007es}, from which the PF, MP and PF-MP algorithms could equally benefit.

\begin{table}[htbp]
\centering
\small
\begin{tabular}{@{}rl@{}} \toprule
Parameter & Value \\ \midrule
Lattice extent & $16^3\times32$ \\
Gauge action & Wilson \\
Fermion action & Even-odd Wilson \\
Solver & Conjugate gradient \\
$\beta$ & 5.6 \\
$\kappa$ & 0.15825 \\ \bottomrule
\end{tabular}
\caption{Table of configuration parameters \label{tab:config}}
\end{table}

A machine-independent indicator of the cost of generating independent configurations is the number of $K$ (and $J$) multiplications $N_{\mathrm{mat}}$ required to generate each configuration.
However, we also have to take the acceptance rate $P_{acc}$ into account, because if only a few trajectories are accepted it will take many more tries to generate independent configurations.
Thus, we use cost function
\begin{equation}
	C = N_{\mathrm{mat}}/P_{\mathrm{acc}} \label{eq:cost_func}
\end{equation}
as a measure of the expense to produce independent configurations.
Our choice of cost function has been used before \cite{AliKhan:2003mc}.

Throughout this paper, we attempt to tune the acceptance rate to the range $P_{acc} = [0.65, 0.75]$, as this has been shown to be cost effective for a second order integrator \cite{Takaishi:1999bi}.
The quantities we calculate in the following results have errors given by \cite{G+L}
\begin{equation}
	\sigma = \sqrt{2 \tau_{\mathrm{int}}} \frac{\sigma_p}{\sqrt{N}} \label{eq:error_corrected}
\end{equation}
where $\sigma_p$ is the sample standard deviation of the quantity of interest,  $N$ is the sample size, and $\tau_{int}$ is the integrated autocorrelation time.
For the purposes of \eqref{eq:error_corrected}, we use the estimate $\tau_{\mathrm{int}} = 12.5$ for all runs,
as tests with the plaquette show that the autocorrelation time for each run lies in the range 8 -- 15.

\subsection{Comparison of polynomial and mass filtering} \label{sec:1filter}

We begin our analysis by measuring the performance of polynomial filtering~\eqref{eq:1poly} relative to mass preconditioning~\eqref{eq:hasenbusch}. This will provide a baseline with which we can compare the combined PF-MP algorithm.

Starting with the simplest case of a single filter, there are still several parameters to tune.
The mass preconditioned action (1MP) has the free parameter $m',$ which, for Wilson fermions as considered here, is equivalent to the hopping parameter $\kappa' < \kappa$.
As for polynomial filtering, we select the ellipse parameters $(\mu,\nu) = (1.2,0.9)$ to minimize the force term (see \ref{app:chebypoly}) and hold these values fixed throughout the paper.
The polynomial action (1PF) then has only one free parameter: $p$, the polynomial order.
Finally, as we are using a multi-scale integrator, both actions have 3 step-sizes $\lbrace h_0=h_G, h_1, h_2 \rbrace$ to tune.

We tune the parameters as described in section \ref{sec:tuning_method}: a set of appropriate $\kappa'$ and $p$ are chosen, then the step-sizes $\{h_G, h_1, h_2\}$ are tuned according to the force (shown in Figure \ref{fig:force_1f}) via the balancing scheme~\eqref{eq:force_tune}.
However, since the gauge term $S_G$ is very cheap to calculate, it is easier to set the step-size $h_G$ to be sufficiently small such that the produced acceptance rates do not vary, then neglect any further tuning.
The generalized multi-scale integration scheme (\ref{app:genint}) makes this even easier, as we can keep $h_G$ constant across runs without worrying about whether the other step-sizes are multiples.
The resulting parameter choices are given in Tables \ref{tab:1pf_param} and \ref{tab:1hf_param};
note that we express the step-sizes in terms of the number of steps $n_j$ at each scale, which are related to $h_j$ via $h_j = \tau/n_j$.
We also show the average number of $K$ (and $J$) multiplications required to evaluate the forces as a basis for comparison between the two methods.

\begin{table}[htbp]
\small
\centering
\begin{tabular}{@{}llllllll@{}} \toprule
$p$ & $\mu$ & $\nu$ & $n_2$ & $n_1$ & $n_0$ & mat/$F_1$ & mat/$F_2$ \\ \midrule
4 & 1.2 & 0.9 & 48 & 120 & 480 & 6 & 672(12) \\
10 &&& 36 & 160 & 480 & 18 & 741(13) \\
20 &&& 24 & 240 & 480 & 38 & 693(12) \\ \bottomrule
\end{tabular}
\caption{Single polynomial filter parameters. `mat/$F_i$' denotes the average number of matrix multiplications by $K$ to evaluate the force $F_i$. There is no inversion required for $F_1$, so the number of matrix multiplications needed is exactly $2p - 2$ (see \ref{app:force_terms}).}
\label{tab:1pf_param}
\end{table}

\begin{table}[htbp]
\small
\centering
\begin{tabular}{@{}lllllll@{}} \toprule
$\kappa'$ & $n_2$ & $n_1$ & $n_0$ & mat/$F_1$ & mat/$F_2$ \\ \midrule
0.154 & 8 & 120 & 480 & 84.2(4) & 677(12) \\
0.1545 & 7 & 96 & 480 & 113.8(8) & 627(14) \\
0.155 & 7 & 120 & 480 & 112.2(8) & 631(10)\\
0.1555 & 6 & 120 & 480 & 135.4(1.0) & 696(12) \\
0.156 & 5 & 120 & 480 & 172.7(1.8) & 686(14) \\ \bottomrule
\end{tabular}
\caption{Single mass filter parameters \label{tab:1hf_param}}
\end{table}

Figure \ref{fig:cost_1f} shows the cost $C$ \eqref{eq:cost_func} for
generating each trajectory for the mass preconditioned and the polynomial filtered actions respectively.
Looking at this figure, we see that a single mass filter provides a better overall performance than
a single polynomial filter, with a cost of $C = 43,800 \pm 3,500$ at $\kappa' = 0.1545$ compared with $C = 87,500 \pm 7,400$ at $p = 10.$

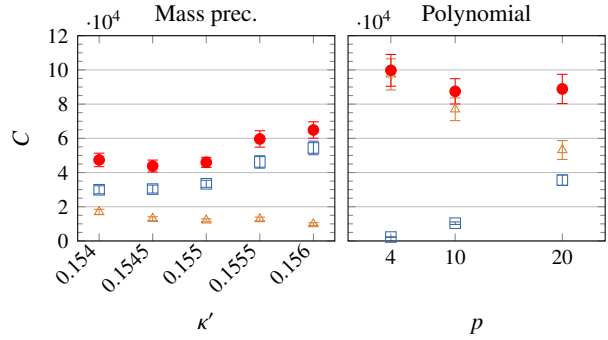
\begin{figure}[htbp]
\centering
% 1-filter weigthed matrix op plots
\begin{tikzpicture}[baseline, trim axis group left]

\begin{groupplot}[
	group style={
		group size=2 by 1,
		horizontal sep=5pt,
		yticklabels at=edge left,
		},
	ymin=0, ymax=120000,
	%small,
	footnotesize,
	minor y tick num=3,
	scaled y ticks=base 10:-4,
	y errors,
	ymajorgrids,
	xlabel style={
		at={(0.5,-0.4)},
		anchor=mid,
	},
	]

\nextgroupplot[
	title={Mass prec.\makebox[0pt]{\phantom{y}}}, % Ensures titles line up via a strut
	xlabel={$\kappa'$},
	ylabel={$C$},
	only marks,
	xtick={0.154, 0.1545, 0.155, 0.1555, 0.156},
	cycle list name=mstone_d,
	xticklabel style={
		rotate=45,
		anchor=east,
		/pgf/number format/precision=4,
	},
]

\renewcommand{\matrixopfile}{\figdir/data/witers_j133X.txt}
\foreach \i in {1,2} {
\addplot	
	table[
		x=rho,
		y=iter_F\i,
		y error expr={\thisrow{iter_F\i_err}*\autocorr},
		%y error=iter_F\i_err,
	]
	{\matrixopfile};
	}
	
% Negligible contrib, so don't plot
%\addplot	
%	table[
%		x=rho,
%		y=iter_SF,
%		y error expr={\thisrow{iter_SF_err}*\autocorr},
%	]
%	{\matrixopfile};
	
\addplot+[red, mark=*]
	table[
		x=rho,
		y expr={\thisrow{iter_SF} + \thisrow{iter_F1} + \thisrow{iter_F2}},
		y error expr={\thisrow{iter_tot_err}*\autocorr},
	]
	{\matrixopfile};

\nextgroupplot[
	title={Polynomial},
	xlabel={$p$},
	only marks,
	xtick=data,
	cycle list name=mstone_d,
	enlarge x limits=0.25,
]

\renewcommand{\matrixopfile}{\figdir/data/witers_j131X.txt}

\foreach \i in {1,2} {
\addplot	
	table[
		x=p,
		y=iter_F\i,
		y error expr={\thisrow{iter_F\i_err}*\autocorr},
	]
	{\matrixopfile};
	}

% Negligible contrib, so don't plot
%\addplot	
%	table[
%		x=p,
%		y=iter_SF,
%		y error expr={\thisrow{iter_SF_err}*\autocorr},
%	]
%	{\matrixopfile};

\addplot+[red, mark=*]
	table[
		x=p,
		y expr={\thisrow{iter_SF} + \thisrow{iter_F1} + \thisrow{iter_F2}},
		y error expr={\thisrow{iter_tot_err}*\autocorr},
	]
	{\matrixopfile};

\end{groupplot}
\end{tikzpicture}
\caption{Cost function for 1-filter actions.
Squares = matrix operations to construct $F_1$, triangles = $F_2$ construction, filled circles = total. There are some extra matrix multiplications required to initialize the pseudo-fermions $\phi$ and to construct the fermion action $S_F$ given $\phi$ and $U$, but these are negligible for all the actions considered in this paper. \label{fig:cost_1f}}
\end{figure}

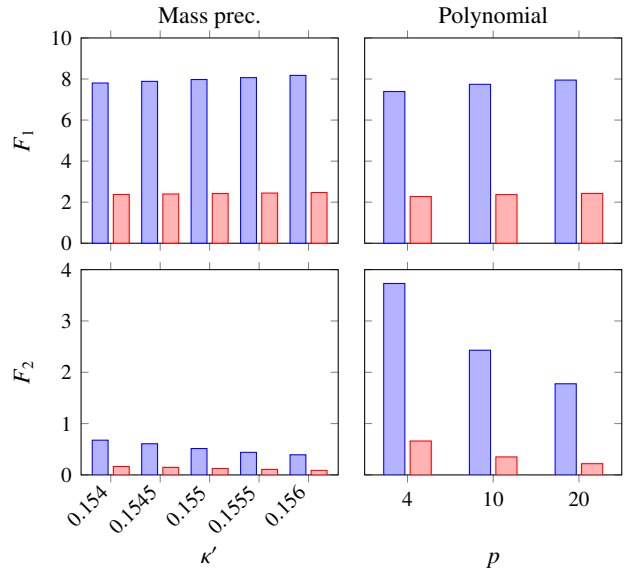
\begin{figure}[htbp]
\centering
\begin{tikzpicture}[baseline, trim axis group left]

%\tikzset{minmax/.style={#1, fill=#1!30!white, mark=none},
%	minmax/.default=red}
\renewcommand{\maxforcefile}{\figdir/data/fmax_j131X.txt}
\renewcommand{\avgforcefile}{\figdir/data/favg_j131X.txt}
\newcommand{\maxforcefilea}{\figdir/data/fmax_j133X.txt}
\newcommand{\avgforcefilea}{\figdir/data/favg_j133X.txt}

\pgfplotsset{
	left plot/.style={
		bar width=6pt,
		xtick=data,
		enlarge x limits=0.15,
		x tick label style={rotate=45, anchor=east, /pgf/number format/precision=4},
	},
	right plot/.style={
		bar width=8pt,
		xtick=data,
		enlarge x limits=0.25,
		symbolic x coords={4,10,20},
	},
}

\begin{groupplot}[
	%small,
	footnotesize,
	group style={
		group size=2 by 2,
		horizontal sep=10pt,
		vertical sep=10pt,
		xlabels at=edge bottom,
		xticklabels at=edge bottom,
		yticklabels at=edge left,
		},
	ybar,
	xlabel style={
		at={(0.5,-0.4)},
		anchor=mid,
	},
	ylabel style={
		at={(-0.15, 0.5)},
	}
	]

% F1 plots

\nextgroupplot[
	title={Mass prec.\makebox[0pt]{\phantom{y}}}, % Ensures titles line up via a strut
	ylabel={$F_1$},
	ymin=0, ymax=10,
	left plot,
]

\addplot
	table[
		x=rho,
		y=F_F1,
		y error=F_F1_err,
	]
	{\maxforcefilea};

\addplot
	table[
		x=rho,
		y=F_F1,
		y error=F_F1_err,
	]
	{\avgforcefilea};

\nextgroupplot[
	title={Polynomial},
	ymin=0, ymax=10,
	right plot,
	xticklabels={},
]

\addplot
	table[
		x=p,
		y=F_F1,
		y error=F_F1_err,
	]
	{\maxforcefile};

\addplot
	table[
		x=p,
		y=F_F1,
		y error=F_F1_err,
	]
	{\avgforcefile};

% F_2 plots

\nextgroupplot[
	xlabel={$\kappa'$},
	ylabel={$F_2$},
	ymin=0, ymax=4,
	left plot,
]

\addplot
	table[
		x=rho,
		y=F_F2,
		y error=F_F2_err,
	]
	{\maxforcefilea};

\addplot
	table[
		x=rho,
		y=F_F2,
		y error=F_F2_err,
	]
	{\avgforcefilea};

\nextgroupplot[
	xlabel={$p$},
	ymin=0, ymax=4,
	right plot,
]

\addplot
	table[
		x=p,
		y=F_F2,
		y error=F_F2_err,
	]
	{\maxforcefile};

\addplot
	table[
		x=p,
		y=F_F2,
		y error=F_F2_err,
	]
	{\avgforcefile};

\end{groupplot}
\end{tikzpicture}
\caption{1-filter forces.
The left hand plots show the mass preconditioned action's forces whilst the right hand plots show the polynomial filtered action's forces.
For each fermion term $S_1$, $S_2$, the maximal and average forces are plotted for each choice of $\kappa' / p$.}
\label{fig:force_1f}
\end{figure}

Given that the cost to evaluate the filter term $F_1$ is significantly
less for the polynomial filter (Table~\ref{tab:1pf_param}) than for
the mass filter (Table~\ref{tab:1hf_param}), it is worthwhile to try
to further understand the difference between the two filters.
We can do this by considering the force terms.
Examining Figure~\ref{fig:force_1f}, we see that the force for the filter term $F_1$ is similar for both cases.
However, the average and maximal forces for the correction term $F_2/\tilde{F}_2$ are much larger in the polynomial case than in the mass preconditioning case.
This leads to more molecular dynamics steps $n_2$ via \eqref{eq:force_tune} for PFHMC (see Table~\ref{tab:1pf_param}), and is the main reason for the higher cost.
As shown in Table~\ref{tab:1pf_param} and indicated by the squares in the right-hand graph of Figure~\ref{fig:cost_1f},
increasing the polynomial order to reduce this force simultaneously increases the cost to calculate $F_1$,
making polynomials of very large order inefficient filters.

The results for a single filter term stand to reason. Given that the
Hasenbusch filter is constructing a Krylov-space polynomial to
approximate the inverse, a short polynomial term of order 10 cannot
capture as much of the dynamics as a Hasenbusch filter that requires 80
or more iterations to invert. 

As was done in the original polynomial filtering paper~\cite{Kamleh:2011dc}, we can factor a higher-order polynomial filter into two terms (see \eqref{eq:2poly}) without introducing any additional fine tuning.
We denote this technique 2PF for brevity.
We set the factoring polynomial's order to $p_1=4$ to keep the cost of $F_1$ low, then vary the order of the factored polynomial $p_2$.
The parameter set is shown in Table \ref{tab:2pf_param}.
The cost function for 2PF is shown in Figure \ref{fig:cost_2pf} alongside 1MP for comparison.
The minimum of $C = 47,700 \pm 3,700$ here is a marked improvement over 1PF's minimum of $C = 87,500 \pm 7,400$, and is quite comparable to 1MP's performance.

\begin{table}[htbp]
\small
\centering
\begin{tabular}{@{}llllllll@{}} \toprule
$p_1$ & $p_2$ & $\mu$ & $\nu$ & $n_3$ & $n_2$ & $n_1$ & $n_0$ \\ \midrule
4 & 24 & 1.2 & 0.9 & 24 & 20 & 108 & 480 \\
& 34 &&& 20 & 16 & 80 & 480 \\
& 54 &&& 16 & 30 & 120 & 480 \\ \bottomrule
\end{tabular}
\caption{Configuration parameters for 2PF}
\label{tab:2pf_param}
\end{table}

\begin{figure}[htbp]
\centering
% 2PF weighted matrix op plots
\begin{tikzpicture}[baseline, trim axis group left]

\begin{groupplot}[
	group style={
		group size=2 by 1,
		horizontal sep=5pt,
		yticklabels at=edge left,
		},
	ymin=0, ymax=70000,
	%small,
	footnotesize,
	minor y tick num=3,
	scaled y ticks=base 10:-4,
	y errors,
	ymajorgrids,
	xlabel style={
		at={(0.5,-0.4)},
		anchor=mid,
	},
	]

\nextgroupplot[
	title={1MP}, % Ensures titles line up via a strut
	xlabel={$\kappa'$},
	ylabel={$C$},
	only marks,
	xtick={0.154, 0.1545, 0.155, 0.1555, 0.156},
	cycle list name=mstone_d,
	xticklabel style={
		rotate=45,
		anchor=east,
		/pgf/number format/precision=4,
	},
]

\renewcommand{\matrixopfile}{\figdir/data/witers_j133X.txt}
\foreach \i in {1,2} {
\addplot	
	table[
		x=rho,
		y=iter_F\i,
		y error expr={\thisrow{iter_F\i_err}*\autocorr},
		%y error=iter_F\i_err,
	]
	{\matrixopfile};
	}
	
% Negligible contrib, so don't plot
%\addplot	
%	table[
%		x=rho,
%		y=iter_SF,
%		y error expr={\thisrow{iter_SF_err}*\autocorr},
%	]
%	{\matrixopfile};
	
\addplot+[red, mark=*]
	table[
		x=rho,
		y expr={\thisrow{iter_SF} + \thisrow{iter_F1} + \thisrow{iter_F2}},
		y error expr={\thisrow{iter_tot_err}*\autocorr},
	]
	{\matrixopfile};

\nextgroupplot[
	title={2PF},
	xlabel={$p_2$},
	only marks,
	xtick=data,
	xticklabels={$24$,$34$,$54$},
	cycle list name=mstone_d,
	enlarge x limits=0.25,
]

\renewcommand{\matrixopfile}{\figdir/data/witers_j132X.txt}

\foreach \i in {1,2,3} {
\addplot	
	table[
		x=q,
		y=iter_F\i,
		y error expr={\thisrow{iter_F\i_err}*\autocorr},
	]
	{\matrixopfile};
	}

% Negligible contrib, so don't plot
%\addplot	
%	table[
%		x=p,
%		y=iter_SF,
%		y error expr={\thisrow{iter_SF_err}*\autocorr},
%	]
%	{\matrixopfile};

\addplot+[red, mark=*]
	table[
		x=q,
		y expr={\thisrow{iter_SF} + \thisrow{iter_F1} + \thisrow{iter_F2} +
				\thisrow{iter_F3}},
		y error expr={\thisrow{iter_tot_err}*\autocorr},
	]
	{\matrixopfile};

\end{groupplot}
\end{tikzpicture}
\caption{Cost function for 1MP versus 2PF.
Squares = matrix operations to construct $F_1$, triangles = $F_2$ construction, empty circles = $F_3$ construction, filled circles = total. \label{fig:cost_2pf}}
\end{figure}
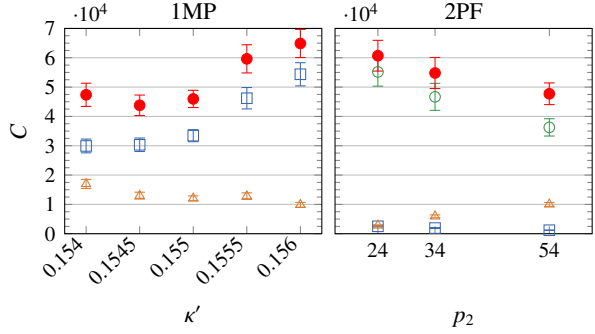

\subsection{Polynomial-filtered mass-preconditioning}

The results of the previous section promote the idea of combining of polynomial filtering with mass preconditioning \eqref{eq:pfmp_action}, where a cheap polynomial filter is placed on top of a Hasenbusch filter.
The PF-MP filtering scheme forms a hierarchy. As $\kappa' < \kappa,$ the spectral range of the Hasenbusch filter $J(\kappa')$ is smaller than $K(\kappa),$ increasing the accuracy with which a short polynomial filter can approximate the inverse. As was argued in Section \ref{sec:pfmp}, applying a polynomial filter to $J(\kappa')$ enables one to reduce the mass difference $\Delta \kappa = \kappa - \kappa'$, increasing the effectiveness of the mass preconditioner. By combining the two schemes in this way we get the best of both worlds: the polynomial term provides a cheap high frequency filter while the Hasenbusch term acts to significantly reduce the force variance in the correction term $S_3.$

In modern simulations, the use of an action with two mass preconditioners,
\begin{equation}
	S_{2MP} = \phi_1^\dag J_1^{-1} \phi_1 + \phi_2^\dag J_1J_2^{-1} \phi_2 + \phi_3^\dag J_2K^{-1} \phi_3,
\end{equation}
is common, and we use this as our benchmark to test the PF-MP
scheme. Both actions have two parameters to tune. For the 2MP action
we have the Hasenbusch filters $J_1(\kappa_1)$ and $J_2(\kappa_2),$
with $\kappa_1 < \kappa_2 < \kappa$.
For the PF-MP action we have the order $p$ of the polynomial term $P(J)$
and the mass $\kappa' < \kappa$ of the Hasenbusch term $J(\kappa')$.
The cheapest filter in each case was fixed --- $\kappa_1 = 0.145$ for 2MP and $p=4$ for PF-MP --- and optimization took place through the choice
of intermediate filter $\kappa_2$ (for 2MP) or $\kappa'$ (for PF-MP)
and the choice of step-sizes $\{h_0, h_1, h_2, h_3\}$. 
As in the previous section, we tune the step-size ratios
such that $F_i h_i \approx \mathrm{ constant}$, then tune the coarsest
step-size $h_3$ to the correct acceptance rate.
The full range of parameters are detailed in
Tables~\ref{tab:2h_param} and \ref{tab:1p1h_param}.

\begin{table}[htbp]
\small
\centering
\begin{tabular}{@{}llllll@{}} \toprule
$\kappa_1$ & $\kappa_2$ & $n_3$ & $n_2$ & $n_1$ & $n_0$ \\ \midrule
0.145 & 0.154 & 8 & 15 & 120 & 480 \\
& 0.155 & 7 & 20 & 96 & 480 \\
& 0.1555 & 6 & 20 & 96 & 480 \\
& 0.156 & 5 & 20 & 120 & 480 \\
& 0.1565 & 4 & 20 & 120 & 480 \\ \bottomrule
\end{tabular}
\caption{Configuration parameters for 2MP \label{tab:2h_param}}
\end{table}

\begin{table}[htbp]
\small
\centering
\begin{tabular}{@{}llllllll@{}} \toprule
$p$ & $\mu$ & $\nu$ & $\kappa'$ & $n_3$ & $n_2$ & $n_1$ & $n_0$ \\ \midrule
4 & 1.2 & 0.9 & 0.154 & 8 & 20 & 80 & 480 \\
&&& 0.155 & 6 & 20 & 120 & 480 \\
&&& 0.1555 & 5 & 20 & 120 & 480 \\
&&& 0.156 & 5 & 30 & 120 & 480 \\
&&& 0.1565 & 4 & 30 & 120 & 480 \\ \bottomrule
\end{tabular}
\caption{Configuration parameters for PF-MP \label{tab:1p1h_param}}
\end{table}

Figure~\ref{fig:force_2f} shows the forces for the 2MP and PF-MP runs.
Whereas for the single-filter actions (Section~\ref{sec:1filter}) the correction term for polynomial filtering has a much greater force variance than that for mass preconditioning, here, the corresponding polynomial correction term $S_2$ for PF-MP has a maximal force only slightly larger than that of the Hasenbusch correction term $S_2$ for 2MP.
This supports the prior argument that polynomial filtering (at a fixed order) is more effective on $J(\kappa')$ than on $K(\kappa)$; we are filtering at a heavier mass $\kappa' < \kappa,$ with an associated suppression in the long range physics.

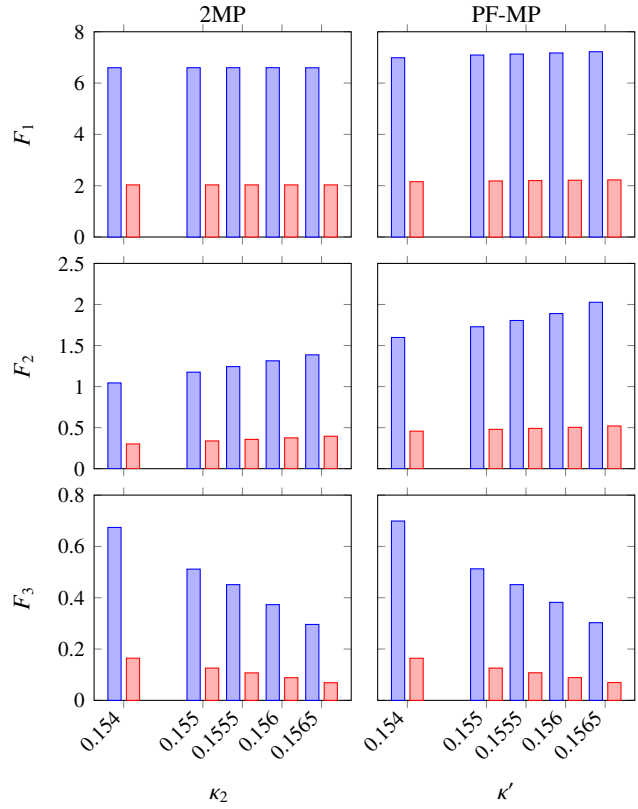
\begin{figure}[htbp]
\centering
\begin{tikzpicture}[baseline, trim axis group left]

%\tikzset{minmax/.style={#1, fill=#1!30!white, mark=none},
%	minmax/.default=red}
\renewcommand{\maxforcefile}{\figdir/data/fmax_j134X.txt}
\renewcommand{\avgforcefile}{\figdir/data/favg_j134X.txt}
\newcommand{\maxforcefilea}{\figdir/data/fmax_j137X1.txt}
\newcommand{\avgforcefilea}{\figdir/data/favg_j137X1.txt}

\pgfplotsset{
	left plot/.style={
		bar width=5pt,
		xtick=data,
		enlarge x limits=0.2,
		x tick label style={rotate=45, anchor=east, /pgf/number format/precision=4},
		enlarge x limits=0.15,
	},
	right plot/.style={
		bar width=5pt,
		xtick=data,
		enlarge x limits=0.2,
		x tick label style={rotate=45, anchor=east, /pgf/number format/precision=4},
		enlarge x limits=0.15,
	},
}

\begin{groupplot}[
	footnotesize,
	group style={
		group size=2 by 3,
		xlabels at=edge bottom,
		horizontal sep=10pt,
		vertical sep=10pt,
		xticklabels at=edge bottom,
		yticklabels at=edge left,
		},
	ybar,
	ylabel style={
		at={(-0.2, 0.5)},
	}
	]
	
\nextgroupplot[
	title={2MP},
	ylabel={$F_1$},
	ymin=0, ymax=8,
	left plot,
]

\addplot
	table[
		x=rho2,
		y=F_F1,
		y error=F_F1_err,
	]
	{\maxforcefile};

\addplot
	table[
		x=rho2,
		y=F_F1,
		y error=F_F1_err,
	]
	{\avgforcefile};

\nextgroupplot[
	title={PF-MP},
	ymin=0, ymax=8,
	right plot,
]

\addplot
	table[
		x=rho,
		y=F_F1,
		y error=F_F1_err,
	]
	{\maxforcefilea};

\addplot
	table[
		x=rho,
		y=F_F1,
		y error=F_F1_err,
	]
	{\avgforcefilea};

\nextgroupplot[
	ylabel={$F_2$},
	ymin=0, ymax=2.5,
	left plot,
]

\addplot
	table[
		x=rho2,
		y=F_F2,
		y error=F_F2_err,
	]
	{\maxforcefile};

\addplot
	table[
		x=rho2,
		y=F_F2,
		y error=F_F2_err,
	]
	{\avgforcefile};

\nextgroupplot[
	ymin=0, ymax=2.5,
	right plot,
]

\addplot
	table[
		x=rho,
		y=F_F2,
		y error=F_F2_err,
	]
	{\maxforcefilea};

\addplot
	table[
		x=rho,
		y=F_F2,
		y error=F_F2_err,
	]
	{\avgforcefilea};

\nextgroupplot[
	xlabel={$\kappa_2$\phantom{$'$}}, % Ensures horizontal alignment
	ylabel={$F_3$},
	ymin=0, ymax=0.8,
	left plot,
]

\addplot
	table[
		x=rho2,
		y=F_F3,
		y error=F_F3_err,
	]
	{\maxforcefile};

\addplot
	table[
		x=rho2,
		y=F_F3,
		y error=F_F3_err,
	]
	{\avgforcefile};

\nextgroupplot[
	xlabel={$\kappa'$},
	ymin=0, ymax=0.8,
	right plot,
]

\addplot
	table[
		x=rho,
		y=F_F3,
		y error=F_F3_err,
	]
	{\maxforcefilea};

\addplot
	table[
		x=rho,
		y=F_F3,
		y error=F_F3_err,
	]
	{\avgforcefilea};

\end{groupplot}
\end{tikzpicture}
\caption{2-filter forces.
The left hand plots show the 2MP forces whilst the right
hand plots show the PF-MP forces. For each fermion term $S_1$, $S_2$, $S_3$, the maximal and average forces are plotted for each choice of $\kappa_2 / \kappa'$.
The third force $F_3$ is the same in each case because the third
action term $S_3$ is also the same.}
\label{fig:force_2f}
\end{figure}

Figures~\ref{fig:mat_2f}, \ref{fig:pacc_2f} and \ref{fig:wmat_2f} show the matrix operation count, acceptance rate, and cost respectively, with 2MP on the left and PF-MP on the right.
Looking at Figure~\ref{fig:wmat_2f}, the optimal point for 2MP is at $\kappa_2 = 0.1555$ with cost $C = 31,000 \pm 2,200$, whereas for PF-MP it is at $\kappa' = 0.155$ with cost $C = 29,000 \pm 1,800$.
We see that the PF-MP scheme can perform just as well as mass preconditioning in this instance.

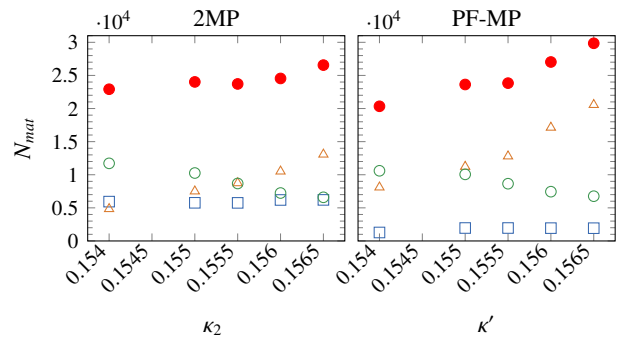
\begin{figure}[htbp]
\centering
\begin{tikzpicture}[baseline, trim axis group left]

\begin{groupplot}[
	group style={
		group size=2 by 1,
		horizontal sep=5pt,
		yticklabels at=edge left,
		},
	ymin=0, ymax=31000,
	footnotesize,
	minor y tick num=4,
	]

\nextgroupplot[
	title={2MP},
	xlabel={$\kappa_2$\phantom{$'$}}, % Ensures horizontal alignment
	ylabel={$N_{mat}$},
	only marks,
	cycle list name=mstone_d,
	xticklabel style={
		rotate=45,
		anchor=east,
		/pgf/number format/precision=4,
	},
]

\renewcommand{\matrixopfile}{\figdir/data/iters_j134X.txt}
% Negligible contribution: omit
%\addplot
%	table[
%		x=rho2,
%		y=iter_SF,
%		%y error=iter_F\i_err,
%	]
%	{\matrixopfile};

\foreach \i in {1,2,3} {
\addplot
	table[
		x=rho2,
		y=iter_F\i,
		%y error=iter_F\i_err,
	]
	{\matrixopfile};
	}

\addplot+[red, mark=*]
	table[
		x=rho2,
		y expr={\thisrow{iter_SF} + \thisrow{iter_F1} + \thisrow{iter_F2} + \thisrow{iter_F3}},
		%y error expr={\thisrow{iter_F1_err}},
	]
	{\matrixopfile};

\nextgroupplot[
	title={PF-MP},
	xlabel={$\kappa'$},
	only marks,
	cycle list name=mstone_d,
	xticklabel style={
		rotate=45,
		anchor=east,
		/pgf/number format/precision=4,
	},
]

\renewcommand{\matrixopfile}{\figdir/data/iters_j137X1.txt}
% Negligible contribution: omit
%\addplot
%	table[
%		x=rho,
%		y expr={\thisrow{iter_SF} + \thisrow{extra_SF}},
%		%y error=iter_F\i_err,
%	]
%	{\matrixopfile};

\foreach \i in {1,2,3} {
\addplot
	table[
		x=rho,
		y expr={\thisrow{iter_F\i} + \thisrow{extra_F\i}},
		%y error=iter_F\i_err,
	]
	{\matrixopfile};
	}

\addplot+[red, mark=*]
	table[
		x=rho,
		y expr={\thisrow{iter_SF} + \thisrow{iter_F1} + \thisrow{iter_F2} + \thisrow{iter_F3} + \thisrow{extra_mat}},
		%y error expr={\thisrow{iter_F1_err}},
	]
	{\matrixopfile};

\end{groupplot}
\end{tikzpicture}
\caption{Matrix operation counts for 2-filter actions.
The squares = matrix ops from constructing $F_1$, triangles = $F_2$ construction, empty circles = $F_3$ construction, filled circles = total. The errors are omitted here due to being smaller than the marker size.} \label{fig:mat_2f}
\end{figure}

\begin{figure}[htbp]
\centering
\begin{tikzpicture}[baseline, trim axis group left]

\begin{groupplot}[
	group style={
		group size=2 by 1,
		horizontal sep=5pt,
		yticklabels at=edge left,
		},
	ymin=0.55, ymax=0.85,
	footnotesize,
	minor y tick num=4,
	]
	
\nextgroupplot[
	title={2MP},
	xlabel={$\kappa_2$\phantom{$'$}}, % Ensures horizontal alignment
	ylabel={$P_{acc}$},
	only marks,
	y errors,
	cycle list name=mstone_d,
	xticklabel style={
		rotate=45,
		anchor=east,
		/pgf/number format/precision=4,
	},
]

\renewcommand{\paccfile}{\figdir/data/pacc_j134X.txt}
\addplot+[red, mark=square*]	
	table[
		x=rho2,
		y=Acc,
		y error expr={\thisrow{Acc_err}*\autocorr},
	]
	{\paccfile};

\nextgroupplot[
	title={PF-MP},
	xlabel={$\kappa'$},
	only marks,
	y errors,
	cycle list name=mstone_d,
	xticklabel style={
		rotate=45,
		anchor=east,
		/pgf/number format/precision=4,
	},
]

\renewcommand{\paccfile}{\figdir/data/pacc_j137X1.txt}
\addplot+[red, mark=square*]	
	table[
		x=rho,
		y=Acc,
		y error expr={\thisrow{Acc_err}*\autocorr},
	]
	{\paccfile};

\end{groupplot}
\end{tikzpicture}
\caption{Acceptance rates for 2-filter actions}
\label{fig:pacc_2f}
\end{figure}

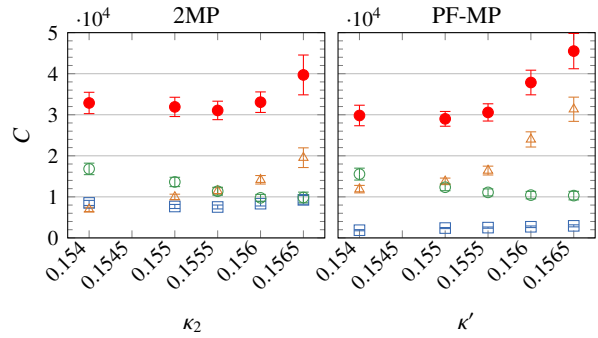
\begin{figure}[htbp]
\centering
% 2-filter weighted matrix op plots
\begin{tikzpicture}[baseline, trim axis group left]

\begin{groupplot}[
	group style={
		group size=2 by 1,
		horizontal sep=5pt,
		yticklabels at=edge left,
		},
	ymin=0, ymax=50000,
	%small,
	footnotesize,
	minor y tick num=4,
	ymajorgrids,
	y errors,
	]

\nextgroupplot[
	title={2MP},
	xlabel={$\kappa_2$\phantom{$'$}}, % Ensures horizontal alignment
	ylabel={$C$},
	only marks,
	cycle list name=mstone_d,
	xticklabel style={
		rotate=45,
		anchor=east,
		/pgf/number format/precision=4,
	},
]

\renewcommand{\matrixopfile}{\figdir/data/witers_j134X.txt}
% Negligible contribution: omit
%\addplot
%	table[
%		x=rho2,
%		y=iter_SF,
%		y error expr={\thisrow{iter_SF_err}*\autocorr},
%	]
%	{\matrixopfile};

\foreach \i in {1,2,3} {
\addplot	
	table[
		x=rho2,
		y=iter_F\i,
		y error expr={\thisrow{iter_F\i_err}*\autocorr},
	]
	{\matrixopfile};
}

\addplot+[red, mark=*]
	table[
		x=rho2,
		y expr={\thisrow{iter_SF} + \thisrow{iter_F1} + \thisrow{iter_F2} + \thisrow{iter_F3}},
		y error expr={\thisrow{iter_tot_err}*\autocorr},
	]
	{\matrixopfile};

\nextgroupplot[
	title={PF-MP},
	xlabel={$\kappa'$},
	only marks,
	cycle list name=mstone_d,
	xticklabel style={
		rotate=45,
		anchor=east,
		/pgf/number format/precision=4,
	},
]

\renewcommand{\matrixopfile}{\figdir/data/witers_j137X1.txt}
% Negligible contribution: omit
%\addplot
%	table[
%		x=rho,
%		y=iter_SF,
%		y error expr={\thisrow{iter_SF_err}*\autocorr},
%	]
%	{\matrixopfile};

\foreach \i in {1,2,3} {
\addplot	
	table[
		x=rho,
		y=iter_F\i,
		y error expr={\thisrow{iter_F\i_err}*\autocorr},
	]
	{\matrixopfile};
	}

\addplot+[red, mark=*]
	table[
		x=rho,
		y expr={\thisrow{iter_SF} + \thisrow{iter_F1} + \thisrow{iter_F2} + \thisrow{iter_F3}},
		y error expr={\thisrow{iter_tot_err}*\autocorr},
	]
	{\matrixopfile};

\end{groupplot}
\end{tikzpicture}
\caption{Cost function for 2-filter actions.
The squares = matrix ops from constructing $F_1$, triangle = $F_2$ construction, empty circles = $F_3$ construction, filled circles = total.}
\label{fig:wmat_2f}
\end{figure}

\subsection{Tests with 3-level filters}

We have examined the PF-MP action in the case of a single polynomial
filter applied to a single mass preconditioner, which we can denote as
1PF-1MP.  Within the PF-MP scheme, as for plain polynomial filtering,
we can increase the order of the polynomial filter and then factor
that into two terms to see if the introduction of an additional
intermediate scale provides any additional benefit. This does not
require any additional fine tuning, as the choice of polynomial order
$p$ provides direct control over the cost and scale of the filter
terms, independent of the quark mass. We denote the scheme with a
2-level polynomial filter and a single mass preconditioner as
2PF-1MP:
\begin{IEEEeqnarray}{rCl}
	S_{2PF-1MP} &=& \phi_1^\dag P_1(J) \phi_1 + \phi_2^\dag Q(J) \phi_2 \nonumber \\
	&& +\ \phi_3^\dag [JP_2(J)]^{-1} \phi_3 + \phi_4^\dag JK^{-1} \phi_4.
\end{IEEEeqnarray}
For completeness we also examine the 1PF-2MP scheme with a
single polynomial filter and 2 levels of mass preconditioning,
\begin{IEEEeqnarray}{rCl}
	S_{1PF-2MP} &=& \phi_1^\dag P(J_1) \phi_1 + \phi_2^\dag [J_1 P(J_1)]^{-1} \phi_2 \nonumber \\
	&& +\ \phi_3^\dag J_1 J_2^{-1} \phi_3 + \phi_4^\dag J_2K^{-1} \phi_4;
\end{IEEEeqnarray}
however, this does introduce an additional mass parameter that requires fine tuning.

For the 1PF-2MP scheme, we fix the polynomial order at $p=4$ as
with PF-MP, and set $\kappa_1$ to $0.145$ to match the 2MP
runs.
For the 2PF-1MP scheme, we choose $p = p_2 = 24,$ factored into
terms of order $p_1=4$ and $q = p_2 - p_1 = 20,$ leaving only the
single Hasenbusch parameter $\kappa'$ to tune. See
Tables~\ref{tab:1p2h_param} and \ref{tab:2p1h_param} for a full list
of parameters. The forces for 1PF-2MP and 2PF-1MP are shown in Figure~\ref{fig:force_3f}; note that the forces associated with the $S_3$ term are significantly smaller for 2PF-1MP than for 1PF-2MP.
Figures~\ref{fig:mat_3f}, \ref{fig:pacc_3f} and \ref{fig:wmat_3f} show
the matrix operation count, acceptance rate and cost respectively.

\begin{table}[htbp]
\small
\centering
\begin{tabular}{@{}llllllllll@{}} \toprule
$p$ & $\mu$ & $\nu$ & $\kappa_1$ & $\kappa_2$ & $n_4$ & $n_3$ & $n_2$ & $n_1$ & $n_0$ \\ \midrule
4 & 1.2 & 0.9 & 0.145
& 0.153 & 11 & 12 & 16 & 96 & 480 \\
&&&& 0.154 & 8 & 15 & 15 & 96 & 480 \\
&&&& 0.1555 & 6 & 20 & 20 & 96 & 480 \\
&&&& 0.1565 & 4 & 24 & 20 & 96 & 480 \\ \bottomrule
\end{tabular}
\caption{Configuration parameters for 1PF-2MP \label{tab:1p2h_param}}
\end{table}

\begin{table}[htbp]
\small
\centering
\begin{tabular}{@{}llllllllll@{}} \toprule
$p_1$ & $p_2 $ & $\mu$ & $\nu$ & $\kappa'$ & $n_4$ & $n_3$ & $n_2$ & $n_1$ & $n_0$ \\ \midrule
4 & 24 & 1.2 & 0.9
& 0.153 & 10 & 5 & 16 & 80 & 480 \\
&&&& 0.154 & 9 & 6 & 24 & 120 & 480 \\
&&&& 0.1555 & 6 & 8 & 20 & 120 & 480 \\
&&&& 0.1565 & 4 & 10 & 24 & 120 & 480 \\ \bottomrule
\end{tabular}
\caption{Configuration parameters for 2PF-1MP \label{tab:2p1h_param}}
\end{table}

\begin{figure}[htbp]
\centering
\begin{tikzpicture}[baseline, trim axis group left]

%\tikzset{minmax/.style={#1, fill=#1!30!white, mark=none},
%	minmax/.default=red}
\renewcommand{\maxforcefile}{\figdir/data/fmax_j1301X.txt}
\renewcommand{\avgforcefile}{\figdir/data/favg_j1301X.txt}
\newcommand{\maxforcefilea}{\figdir/data/fmax_j1302X.txt}
\newcommand{\avgforcefilea}{\figdir/data/favg_j1302X.txt}

\pgfplotsset{
	left plot/.style={
		bar width=7pt,
		xtick=data,
		enlarge x limits=0.2,
		x tick label style={rotate=45, anchor=east},
	},
	right plot/.style={
		bar width=7pt,
		xtick=data,
		enlarge x limits=0.2,
		x tick label style={rotate=45, anchor=east},
	},
}

\begin{groupplot}[
	footnotesize,
	group style={
		group size=2 by 4,
		horizontal sep=10pt,
		vertical sep=10pt,
		xticklabels at=edge bottom,
		yticklabels at=edge left,
		xlabels at=edge bottom,
		},
	ybar,
	x tick label style={
		/pgf/number format/precision=4,
	},
	ylabel style={
		at={(-0.15, 0.5)},
	}
	]
	
\nextgroupplot[
	title={1PF-2MP},
	ylabel={$F_1$},
	ymin=0, ymax=8,
	left plot,
]

\addplot
	table[
		x=rho2,
		y=F_F1,
		y error=F_F1_err,
	]
	{\maxforcefile};

\addplot
	table[
		x=rho2,
		y=F_F1,
		y error=F_F1_err,
	]
	{\avgforcefile};

\nextgroupplot[
	title={2PF-1MP},
	ymin=0, ymax=8,
	right plot,
]

\addplot
	table[
		x=rho,
		y=F_F1,
		y error=F_F1_err,
	]
	{\maxforcefilea};

\addplot
	table[
		x=rho,
		y=F_F1,
		y error=F_F1_err,
	]
	{\avgforcefilea};

\nextgroupplot[
	ylabel={$F_2$},
	ymin=0, ymax=2,
	left plot,
]

\addplot
	table[
		x=rho2,
		y=F_F2,
		y error=F_F2_err,
	]
	{\maxforcefile};

\addplot
	table[
		x=rho2,
		y=F_F2,
		y error=F_F2_err,
	]
	{\avgforcefile};

\nextgroupplot[
	ymin=0, ymax=2,
	right plot,
]

\addplot
	table[
		x=rho,
		y=F_F2,
		y error=F_F2_err,
	]
	{\maxforcefilea};

\addplot
	table[
		x=rho,
		y=F_F2,
		y error=F_F2_err,
	]
	{\avgforcefilea};

\nextgroupplot[
	ylabel={$F_3$},
	ymin=0, ymax=2,
	left plot,
]

\addplot
	table[
		x=rho2,
		y=F_F3,
		y error=F_F3_err,
	]
	{\maxforcefile};

\addplot
	table[
		x=rho2,
		y=F_F3,
		y error=F_F3_err,
	]
	{\avgforcefile};

\nextgroupplot[
	ymin=0, ymax=2,
	right plot,
]

\addplot
	table[
		x=rho,
		y=F_F3,
		y error=F_F3_err,
	]
	{\maxforcefilea};

\addplot
	table[
		x=rho,
		y=F_F3,
		y error=F_F3_err,
	]
	{\avgforcefilea};

\nextgroupplot[
	xlabel={$\kappa_2$\phantom{$'$}}, % Ensures horizontal alignment
	ylabel={$F_4$},
	ymin=0, ymax=1,
	left plot,
]

\addplot
	table[
		x=rho2,
		y=F_F4,
		y error=F_F4_err,
	]
	{\maxforcefile};

\addplot
	table[
		x=rho2,
		y=F_F4,
		y error=F_F4_err,
	]
	{\avgforcefile};

\nextgroupplot[
	xlabel={$\kappa'$},
	ymin=0, ymax=1,
	right plot,
]

\addplot
	table[
		x=rho,
		y=F_F4,
		y error=F_F4_err,
	]
	{\maxforcefilea};

\addplot
	table[
		x=rho,
		y=F_F4,
		y error=F_F4_err,
	]
	{\avgforcefilea};

\end{groupplot}
\end{tikzpicture}
\caption{3-filter forces.
The left hand plots show the 1PF-2MP forces whilst the right
hand plots show the 2PF-1MP forces. For each fermion term, the maximal and average forces are plotted for each choice of $\kappa_2 / \kappa'$. }
\label{fig:force_3f}
\end{figure}
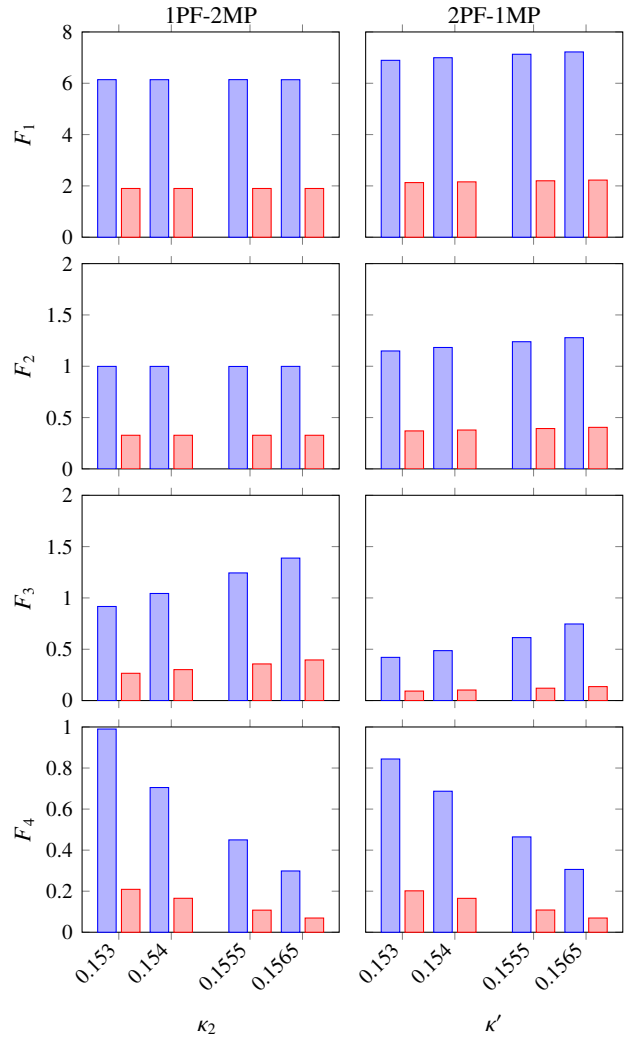

\begin{figure}[htbp]
\centering
% 1-filter matrix op plots
\begin{tikzpicture}[baseline, trim axis group left]

\begin{groupplot}[
	group style={
		group size=2 by 1,
		horizontal sep=5pt,
		yticklabels at=edge left,
		},
	ymin=0, ymax=30000,
	%small,
	footnotesize,
	minor y tick num=1,
	]

\nextgroupplot[
	title={1PF-2MP},
	xlabel={$\kappa_2$\phantom{$'$}}, % Ensures horizontal alignment
	ylabel={$N_{mat}$},
	only marks,
	cycle list name=mstone_d,
	xticklabel style={
		/pgf/number format/precision=4,
	},
	xtick={0.153, 0.154, 0.155, 0.156},
	minor x tick num=1,
]

\renewcommand{\matrixopfile}{\figdir/data/iters_j1301X.txt}
%\addplot	
%	table[
%		x=rho2,
%		y expr={\thisrow{iter_SF} + \thisrow{extra_SF}},
%		%y error=iter_SF_err,
%	]
%	{\matrixopfile};

\foreach \i in {1,2,3,4} {
\addplot
	table[
		x=rho2,
		y expr={\thisrow{iter_F\i} + \thisrow{extra_F\i}},
		%y error=iter_F\i_err,
	]
	{\matrixopfile};
	}
	
\addplot+[red, mark=*]
	table[
		x=rho2,
		y expr={\thisrow{iter_SF} + \thisrow{iter_F1}
		+ \thisrow{iter_F2} + \thisrow{iter_F3}
		+ \thisrow{iter_F4} + \thisrow{extra_mat}},
		%y error expr={\thisrow{iter_F1_err}},
	]
	{\matrixopfile};

\nextgroupplot[
	title={2PF-1MP},
	xlabel={$\kappa'$},
	only marks,
	cycle list name=mstone_d,
	xticklabel style={
		/pgf/number format/precision=4,
	},
	xtick={0.153, 0.154, 0.155, 0.156},
	minor x tick num=1,
]

\renewcommand{\matrixopfile}{\figdir/data/iters_j1302X.txt}
%\addplot	
%	table[
%		x=rho,
%		y expr={\thisrow{iter_SF} + \thisrow{extra_SF}},
%		%y error=iter_F\i_err,
%	]
%	{\matrixopfile};

\foreach \i in {1,2,3,4} {
\addplot	
	table[
		x=rho,
		y expr={\thisrow{iter_F\i} + \thisrow{extra_F\i}},
		%y error=iter_F\i_err,
	]
	{\matrixopfile};
	}
	
\addplot+[red, mark=*]
	table[
		x=rho,
		y expr={\thisrow{iter_SF} + \thisrow{iter_F1}
		+ \thisrow{iter_F2} + \thisrow{iter_F3}
		+ \thisrow{iter_F4} + \thisrow{extra_mat}},
		%y error expr={\thisrow{iter_F1_err}},
	]
	{\matrixopfile};

\end{groupplot}
\end{tikzpicture}
\caption{Matrix operation counts for 3-filter actions.
The squares = matrix ops due to constructing the force $F_1$, triangles = $F_2$ construction, empty circles = $F_3$ construction, diamonds = $F_4$ construction, filled circles = total.} \label{fig:mat_3f}
\end{figure}
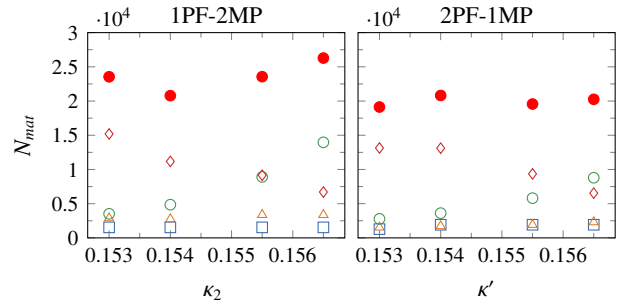

\begin{figure}[htbp]
\centering
\begin{tikzpicture}[baseline, trim axis group left]

\begin{groupplot}[
	group style={
		group size=2 by 1,
		horizontal sep=5pt,
		yticklabels at=edge left,
		},
	ymin=0.55, ymax=0.8,
	footnotesize,
	minor y tick num=4,
	y errors,
	]
	
\nextgroupplot[
	title={1PF-2MP},
	xlabel={$\kappa_2$\phantom{$'$}}, % Ensures horizontal alignment
	ylabel={$P_{acc}$},
	only marks,
	cycle list name=mstone_d,
	xticklabel style={
		/pgf/number format/precision=4,
	},
	xtick={0.153, 0.154, 0.155, 0.156},
	minor x tick num=1,
]

\renewcommand{\paccfile}{\figdir/data/pacc_j1301X.txt}
\addplot+[red, mark=square*]
	table[
		x=rho2,
		y=Acc,
		y error expr={\thisrow{Acc_err}*\autocorr},
	]
	{\paccfile};

\nextgroupplot[
	title={2PF-1MP},
	xlabel={$\kappa'$},
	only marks,
	cycle list name=mstone_d,
	xticklabel style={
		/pgf/number format/precision=4,
	},
	xtick={0.153, 0.154, 0.155, 0.156},
	minor x tick num=1,
]

\renewcommand{\paccfile}{\figdir/data/pacc_j1302X.txt}
\addplot+[red, mark=square*]	
	table[
		x=rho,
		y=Acc,
		y error expr={\thisrow{Acc_err}*\autocorr},
	]
	{\paccfile};

\end{groupplot}
\end{tikzpicture}
\caption{Acceptance rates for 3-filter actions} \label{fig:pacc_3f}
\end{figure}

\begin{figure}[htbp]
\centering
% 3-filter weighted matrix op plots
\begin{tikzpicture}[baseline, trim axis group left]

\begin{groupplot}[
	group style={
		group size=2 by 1,
		horizontal sep=5pt,
		yticklabels at=edge left,
		},
	ymin=0, ymax=40000,
	%small,
	footnotesize,
	minor y tick num=1,
	ymajorgrids,
	y errors,
	]

\nextgroupplot[
	title={1PF-2MP},
	xlabel={$\kappa_2$\phantom{$'$}}, % Ensures horizontal alignment
	ylabel={$C$},
	only marks,
	cycle list name=mstone_d,
	xticklabel style={
		/pgf/number format/precision=4,
	},
	xtick={0.153, 0.154, 0.155, 0.156},
	minor x tick num=1,
]

\renewcommand{\matrixopfile}{\figdir/data/witers_j1301X.txt}
%\addplot	
%	table[
%		x=rho2,
%		y=iter_SF,
%		y error expr={\thisrow{iter_SF_err}*\autocorr},
%	]
%	{\matrixopfile};

\foreach \i in {1,2,3,4} {
\addplot	
	table[
		x=rho2,
		y=iter_F\i,
		y error expr={\thisrow{iter_F\i_err}*\autocorr},
	]
	{\matrixopfile};
	}
	
\addplot+[red, mark=*]
	table[
		x=rho2,
		y expr={\thisrow{iter_SF} + \thisrow{iter_F1}
		+ \thisrow{iter_F2} + \thisrow{iter_F3}
		+ \thisrow{iter_F4}},
		y error expr={\thisrow{iter_tot_err}*\autocorr},
	]
	{\matrixopfile};

\nextgroupplot[
	title={2PF-1MP},
	xlabel={$\kappa'$},
	only marks,
	cycle list name=mstone_d,
	xticklabel style={
		/pgf/number format/precision=4,
	},
	xtick={0.153, 0.154, 0.155, 0.156},
	minor x tick num=1,
]

\renewcommand{\matrixopfile}{\figdir/data/witers_j1302X.txt}
%\addplot	
%	table[
%		x=rho,
%		y=iter_SF,
%		y error expr={\thisrow{iter_SF_err}*\autocorr},
%	]
%	{\matrixopfile};

\foreach \i in {1,2,3,4} {
\addplot	
	table[
		x=rho,
		y=iter_F\i,
		y error expr={\thisrow{iter_F\i_err}*\autocorr}
	]
	{\matrixopfile};
	}
	
\addplot+[red, mark=*]
	table[
		x=rho,
		y expr={\thisrow{iter_SF} + \thisrow{iter_F1}
		+ \thisrow{iter_F2} + \thisrow{iter_F3}
		+ \thisrow{iter_F4}},
		y error expr={\thisrow{iter_tot_err}*\autocorr}
	]
	{\matrixopfile};

\end{groupplot}
\end{tikzpicture}
\caption{Cost function for 3-filter actions.
The squares = matrix ops due to constructing the force $F_1$, triangles = $F_2$ construction, empty circles = $F_3$ construction, diamonds = $F_4$ construction, filled circles = total.} \label{fig:wmat_3f}
\end{figure}
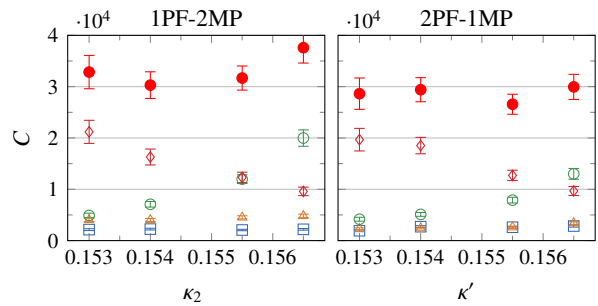

\begin{figure*}[tbp]
\centering
% Summary plot of all cost data

\renewcommand{\autocorr}{5} %actually sqrt(2*iat)

\begin{tikzpicture}[baseline, trim axis group left]

\begin{groupplot}[
	group style={
		group size=6 by 1,
		horizontal sep=0pt,
		yticklabels at=edge left,
		},
	ymin=0, ymax=70000,
	small,
	width=4cm,
	height=8cm,
	scaled y ticks=base 10:-4,
	ytick scale label code/.code={},
	xtick=data,
	y errors,
	title style={
		at={(0.5,1.05)},
		anchor=base,
	},
	xlabel style={
		at={(0.5,-0.17)},
		anchor=mid,
	},
	ymajorgrids=true,
	]
	
% hasen
\nextgroupplot[
	title={1MP},
	xlabel={$\kappa'$},
	ylabel={Cost},
	only marks,
	xtick={0.154, 0.1545, 0.155, 0.1555, 0.156},
	cycle list name=mstone_d,
	slanted xlabels,
	enlarge x limits=0.25,
	ytick scale label code/.code={$\cdot 10^{#1}$},
]

\renewcommand{\matrixopfile}{\figdir/data/witers_j133X.txt}
	
\addplot+[red, mark=*]
	table[
		x=rho,
		y expr={\thisrow{iter_SF} + \thisrow{iter_F1} + \thisrow{iter_F2}},
		y error expr={\thisrow{iter_tot_err}*\autocorr},
	]
	{\matrixopfile};

% pf
%\nextgroupplot[
%	title={Polynomial},
%	xlabel={$p$},
%	%ylabel={$N_{mat}$},
%	%ymin=0, ymax=250000,
%	only marks,
%	xtick=data,
%	cycle list name=mstone_d,
%	enlarge x limits=0.25,
%]
%
%\renewcommand{\matrixopfile}{\figdir/data/witers_j131X.txt}
%
%\addplot+[red, mark=*]
%	table[
%		x=p,
%		y expr={\thisrow{iter_SF} + \thisrow{iter_F1} + \thisrow{iter_F2}},
%		y error expr={\thisrow{iter_tot_err}*\autocorr},
%	]
%	{\matrixopfile};

% 2pf
\nextgroupplot[
	title={2PF},
	xlabel={$p_2$},
	only marks,
	xtick=data,
	xticklabels={$24$,$34$,$54$},
	cycle list name=mstone_d,
	enlarge x limits=0.25,
]

\renewcommand{\matrixopfile}{\figdir/data/witers_j132X.txt}

\addplot+[red, mark=*]
	table[
		x=q,
		y expr={\thisrow{iter_SF} + \thisrow{iter_F1} + \thisrow{iter_F2} + \thisrow{iter_F3}},
		y error expr={\thisrow{iter_tot_err}*\autocorr},
	]
	{\matrixopfile};

% 2hasen
\nextgroupplot[
	title={2MP},
	xlabel={$\kappa_2$},
	only marks,
	cycle list name=mstone_d,
	slanted xlabels,
	enlarge x limits=0.25,
]

\renewcommand{\matrixopfile}{\figdir/data/witers_j134X.txt}

\addplot+[red, mark=*]
	table[
		x=rho2,
		y expr={\thisrow{iter_SF} + \thisrow{iter_F1} + \thisrow{iter_F2} + \thisrow{iter_F3}},
		y error expr={\thisrow{iter_tot_err}*\autocorr},
	]
	{\matrixopfile};

% pfhasen
\nextgroupplot[
	title={1PF-1MP},
	xlabel={$\kappa'$},
	only marks,
	xtick=data,
	cycle list name=mstone_d,
	slanted xlabels,
	enlarge x limits=0.25,
]

\renewcommand{\matrixopfile}{\figdir/data/witers_j137X1.txt}

\addplot+[red, mark=*]
	table[
		x=rho,
		y expr={\thisrow{iter_SF} + \thisrow{iter_F1} + \thisrow{iter_F2} + \thisrow{iter_F3}},
		y error expr={\thisrow{iter_tot_err}*\autocorr},
	]
	{\matrixopfile};

% 1pf2hasen
\nextgroupplot[
	title={1PF-2MP},
	xlabel={$\kappa_2$},
	only marks,
	cycle list name=mstone_d,
	slanted xlabels,
	enlarge x limits=0.25,
]

\renewcommand{\matrixopfile}{\figdir/data/witers_j1301X.txt}

\addplot+[red, mark=*]
	table[
		x=rho2,
		y expr={\thisrow{iter_SF} + \thisrow{iter_F1}
		+ \thisrow{iter_F2} + \thisrow{iter_F3}
		+ \thisrow{iter_F4}},
		y error expr={\thisrow{iter_tot_err}*\autocorr},
	]
	{\matrixopfile};

% 2pf1hasen
\nextgroupplot[
	title={2PF-1MP},
	xlabel={$\kappa'$},
	only marks,
	cycle list name=mstone_d,
	slanted xlabels,
	enlarge x limits=0.25,
]

\renewcommand{\matrixopfile}{\figdir/data/witers_j1302X.txt}

\addplot+[red, mark=*]
	table[
		x=rho,
		y expr={\thisrow{iter_SF} + \thisrow{iter_F1}
		+ \thisrow{iter_F2} + \thisrow{iter_F3}
		+ \thisrow{iter_F4}},
		y error expr={\thisrow{iter_tot_err}*\autocorr},
	]
	{\matrixopfile};

\end{groupplot}
\end{tikzpicture}
\caption{Cost function \eqref{eq:cost_func} for this paper's actions.}
\label{fig:cost_summ}
\end{figure*}

For ease of comparison, the cost function for all the actions considered in this paper are presented in Figure~\ref{fig:cost_summ}, aside from 1PF which has a significantly higher cost than the other actions.
Looking at this figure, the three PF-MP schemes all have a similar cost minimum, which is as good as or better than the 2MP benchmark.
More important is the relative dependence on the free mass parameter, $\kappa'$ or $\kappa_2$.
We can see that for 2MP, 1PF-1MP and
1PF-2MP that a poor choice of $\kappa'/\kappa_2$ can lead to a significant
increase in the cost function (see e.g.\ $\kappa'/\kappa_2 = 0.1565$),
where as the 2PF-1MP cost function has
only a very weak dependence on the Hasenbusch mass parameter.
This demonstrates that no fine tuning of $\kappa'$ is required for 2PF-1MP to achieve optimal performance.

\pagebreak[4]

\section{Conclusion}
\tikzsetfigurename{figure_4.}

We have compared the polynomial filtered and mass preconditioned HMC algorithms, and found that a 2-level polynomial filter provides a benefit similar to a single mass preconditioner.
We proposed combining
the two methods to provide a multi-level frequency-splitting scheme with
minimal fine tuning of the action parameters.
This was partly motivated by
noting that the values $(\mu,\nu)$ determining the Chebyshev
polynomial roots produce a shallow minimum in the polynomial force
term, and hence do not need fine tuning, leaving the polynomial order
$p$ as the only free parameter.

Any form of Sexton--Weingarten integration with a large number of terms
requires a sensible choice of the relative time scales to achieve good
performance. The tuning of the different time steps for our study of
multi-level algorithms was aided by using a generalized multi-scale
integration scheme, permitting any choice of step-size for each action
term. This made it simple to use the force balancing method `$F_i h_i
= \mathrm{constant}$' to select the scale for each action term based on its
(maximal or average) force.

The polynomial-filtered mass-preconditioned (PF-MP) algorithm was
investigated with $n_f = 2$ flavours of dynamical quarks, using
several different combinations of polynomial and Hasenbusch filters,
and compared to 2-level mass preconditioning (2MP) as a baseline. We
found that the 2PF-1MP action yielded a cost function that was as good as
or better than the 2MP action, with a significant reduction in the
tuning effort required to optimize the overall cost. The 2MP action
has two real Hasenbusch parameters $\kappa_1,\kappa_2$ that need to be
tuned.  In contrast, the 2PF-1MP action did not need any fine tuning:
it showed almost no dependence on the Hasenbusch parameter $\kappa',$
and the orders of the polynomial terms (as integers)
were easily chosen to optimize the cost.

This study was performed at an intermediate quark mass $m_\pi \sim 400$
MeV as a proof of the viability of the PF-MP scheme. Simulations at
lighter quark masses typically introduce additional filters to further
ameliorate the cost of these simulations, with some groups even using
6-level mass preconditioning~\cite{Arthur:2012yc}.
At these light quark masses,
the PF-MP algorithm can potentially provide an easier path to gain the
benefits of multi-level frequency splitting.

\appendix

\section*{Acknowledgements}

The authors would like to thank M. Peardon for valuable discussions.
We have used a modified version of BQCD \cite{BQCD} to generate the configurations in this work.
This work was supported by the Victorian Life Sciences Computation Initiative (VLSCI), an initiative of the Victorian Government, Australia, on its Facility hosted at the University of Melbourne, Grant Number NCE31.
This work was also supported by resources provided by The Pawsey Supercomputing Centre with funding from the Australian Government and the Government of Western Australia.
This investigation was supported by the Australian Research Council under Grant Numbers FT100100005, DP120104627, DP140103067, and DP150103164.

\section{Chebyshev polynomials} \label{app:chebypoly}
\tikzsetfigurename{figure_A.}

The Chebyshev approximation to the inverse $K^{-1}$ takes the form
\begin{equation}
K^{-1} \approx P_n(K) = a_n \prod_{k=1}^{n} (K-z_k),
\end{equation}
where the roots $z_k$ are defined via
\begin{equation}
z_k = \mu (1 - \cos \theta_k) - i \sqrt{\mu^2 - \nu^2} \sin \theta_k,
\quad \theta_k = \frac{2\pi k}{n + 1}
\end{equation}
and the normalization $a_n$ is given by
\begin{equation}
a_n = \frac{1}{\mu \prod_{k=1}^{n}(\mu - z_k)}.
\end{equation}
This has three free parameters --- $n$, $\mu$, $\nu$ --- that can
be adjusted to suit the fermion matrix $K$ we wish to approximate the
inverse of. 

The roots describe an ellipse in complex space which passes through the origin, with semi-major axis along the positive real line with length $\mu$ and semi-minor axis $\sqrt{\mu^2 - \nu^2}$.
If we add the origin to the roots to make a set of $n + 1$ points, these points are distributed at equal angles around the ellipse. See Figure \ref{fig:cheby_ellipse} for an example.
The approximation is only effective at points within this ellipse, so one should choose $\mu \geq \nu > 0$ such that the spectrum of $K$ is contained.
For the lattice configuration considered in this paper, the eigenvalues of $K$ go from $\lambda_{\mathrm{min}} = 3.2 \times 10^{-5}$ to $\lambda_{\mathrm{max}} = 2.2$. 
This means we must choose $\mu > 2.2/2 = 1.1$ for a good approximation.

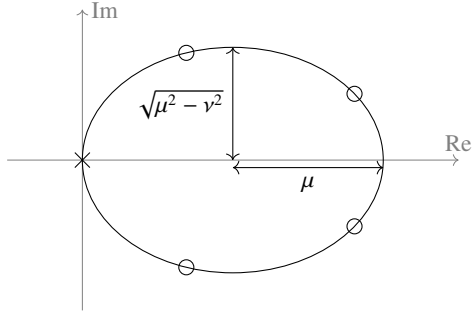
\begin{figure}[htbp]
\centering
\begin{tikzpicture}[font=\small]
\tikzset{
	root/.pic={
		\draw (0,0) circle[radius=0.1cm];
	},
}

\draw[->, help lines] (-3,0) -- (3,0) node[anchor=south] {Re};
\draw[->, help lines] (-2,-2) -- (-2,2) node[anchor=west] {Im};

\draw (0,0) ellipse [x radius=2, y radius=1.5];

\draw[<->] (0,0) -- (0,1.5) node[pos=0.5, anchor=east] {$\sqrt{\mu^2 - \nu^2}$};

\draw[<->] (0,-0.1) -- (2,-0.1) node[pos=0.5, anchor=north] {$\mu$};

\begin{scope}[radius=0.1]
\draw (108:2 and 1.5) pic {root};
\draw (36:2 and 1.5) pic {root};
\draw (-36:2 and 1.5) pic {root};
\draw (-108:2 and 1.5) pic {root};
\draw (-2.1cm, -0.1cm) -- (-1.9cm, 0.1cm)
	(-2.1cm, 0.1cm) -- (-1.9cm, -0.1cm);
\end{scope}

\end{tikzpicture}
\caption{The roots of a 4th order Chebyshev polynomial approximation to $K^{-1}$.
The roots are shown as circles, and if the origin (cross) is included they are evenly distributed around an ellipse.}
\label{fig:cheby_ellipse}
\end{figure}

Aside from needing to bound the eigenvalues, we have a lot of freedom in the choice of $(\mu,\nu)$.
The tuning procedure taken in this work is to simply choose the set $(\mu, \nu)$ that minimizes the average force.
On the configuration used in this paper, we found that $\mu = 1.2$ and $\nu = 0.9$ gave the best forces across our different choices for $n$ and (for the PF-MP actions) mass parameter $\kappa'$.

The remaining parameter, the polynomial order $n$, can then be varied
to ensure a good hierarchy of forces in the fermion action.  This is
similar to choosing $\kappa'$ in mass preconditioning; however, as $n$
must be an integer, the need for fine-tuning is excluded.
Another advantage of polynomial filtering is that choosing a particular $n$ will filter out a similar proportion of the action no matter the choice of mass parameter $\kappa$, whereas, for mass preconditioning, $\kappa'$ has to be varied to find a particular splitting.

A useful property of the Chebyshev polynomial filters is that two
approximations $P_{p_1}$, $P_{p_2}$ with $p_2 > p_1$ and the same
$(\mu, \nu)$ factorize if $(p_1 + 1)$ divides $(p_2 + 1)$.  The ratio
$Q$ is then a polynomial of order $q = p_2 - p_1$, and can be used as
an intermediate filter via \eqref{eq:2poly}, as was shown in previous
work~\cite{Kamleh:2011dc}.
\section{Generalized multi-scale integrator} \label{app:genint}
\tikzsetfigurename{figure_B.}

The generalized multi-scale integrator presented here is an extension of the generalized
leapfrog integrator described in \cite{Kamleh:2011dc} and is mentioned
in \cite{FUEL:2014}.
The general idea is as follows:
assume that for each action term $S_i$ we have an integration scheme that preserves a Hamiltonian $H_i = T + S_i$ through a series of `time' $\hat{T}$ \eqref{eq:time_update} and `space' $\hat{S}_i$ \eqref{eq:space_update} updates.
Supposing all our step-sizes are positive, we can treat the series of time updates like they advance a time parameter $\tau$ from $0$ to $h$.
The generalized multi-scale integrator for the full Hamiltonian $H = T + \sum_i S_i$ then works by advancing through this time, 
inserting the action term updates ($\hat{S}_i$) at times corresponding to their position in the original integrators.
To clarify this process, an example is depicted in Figure \ref{fig:gen_int_2}, where a 3-step leapfrog integrator and a 1-step second-order
minimal norm integrator are combined.

The purpose of the rest of this appendix is to give a concrete definition of the generalized multi-scale integrator scheme and to prove that, with the right component integrators, it is time-reversible and area-preserving as required.

\subsection{Basics}
An integration step for HMC takes the system from some state
$(P,U)$ to another state $(P',U')$. In this section, we will
use the notation $\hat{M}$ for an integration step, with
\begin{equation}
	(P', U') = \hat{M} (P,U).
\end{equation}
Integration steps are typically parametrized by some step-size $\epsilon$ and we denote this with $\hat{M}[\epsilon]$.
It is useful to note that the set of all deterministic integration steps forms a group under composition.

In order for HMC to produce configurations that follow
the desired probability distribution $\exp[-S]$, the integration scheme $\hat{M}$ used must be
\begin{itemize}
\item time-reversible:
\begin{equation*}
	\hat{M}[-\epsilon](-P',U') = (P,U),
\end{equation*}
but as the momentum $P$ only enters the kinetic term $T$ quadratically, we can ignore the minus sign on $P'$ and write
\begin{IEEEeqnarray*}{*l+rCl+}
	 & \hat{M}[-\epsilon](P',U') & = & (P,U) \\
\implies & \hat{M}[-\epsilon] & = & \hat{M}^{-1}[\epsilon] \IEEEyesnumber
\end{IEEEeqnarray*}
\item area preserving:
\begin{equation}
	\det \pdiff{(P',U')}{(P,U)} = 1
\end{equation}
\end{itemize}

Our atomic steps for constructing an appropriate integration
scheme come from Hamilton's equations, and have two flavours:
\begin{IEEEeqnarray}{R.l}
	\hat{S}[\epsilon]: & \hat{S}[\epsilon](P, U) = (P - \epsilon F(U), U) \IEEEyesnumber\IEEEyessubnumber* \label{eq:space_update} \\
\noalign{\noindent and\vspace{\jot}}
	\hat{T}[\epsilon]: & \hat{T}[\epsilon](P, U) = (P, e^{i\epsilon P} U), \label{eq:time_update}
\end{IEEEeqnarray}
where $F(U) = \left.\pdiff{S}{U}\right|_U$ is the force term. When we have
multiple action terms $S = S_1 + S_2 + \ldots$, we can use integration steps $\hat{S}_i$ for each force term $F_i$.
These atomic steps are both time-reversible and area preserving.

We denote a scheme composed solely of $\hat{T}$ and $\hat{S}_i$ steps \emph{symplectic}, as each step is tangential to the curve in phase space where the Hamiltonian $H$ is preserved.

\subsection{Area preservation}
When an integration scheme is composed of several steps, it is easy to prove area-preservation:
since $\det AB = \det A \det B$, any product of area-preserving steps (such as $\hat{S}_i$ and $\hat{T}$) is automatically area-preserving.
In particular, symplectic schemes are area-preserving.

\subsection{Time reversibility}
$\hat{S}$ and $\hat{T}$ have the special property
that
\begin{IEEEeqnarray*}{rCl}\
	\hat{S}[a+b] & = & \hat{S}[a]\hat{S}[b], \\
	\hat{T}[a+b] & = & \hat{T}[a]\hat{T}[b],
\end{IEEEeqnarray*}
for all $a, b \in \mathbb{R}$ and $\hat{S}[0] = \hat{T}[0] = \hat{I}$.
Note that this implies the property $[\hat{S}[a], \hat{S}[b]] = 0 = [\hat{T}[a], \hat{T}[b]]$, so the sets $\{\hat{S}[\epsilon]\}$ and $\{\hat{T}[\epsilon]\}$ form Abelian groups.

Due to the above property, a symplectic integration scheme can be written in the form
\begin{equation}
	\hat{M}[h] = \hat{T}[a_{n+1}] \mathcal{L}\prod_{i=1}^{n} \left( \hat{S}[b_i] \hat{T}[a_i] \right), \label{eq:gen_int_scheme}
\end{equation}
where $a_i, b_i \in \mathbb{R}$, $\mathcal{L}\prod$ denotes a product with the first index acting first (i.e.\ on the right) and $\sum a_i = \sum b_i = h$ by convention.
To ensure this expression is unique, we mandate that $b_i \neq 0$\ $\forall i = 1,\ldots,n$ and $a_i \neq 0$\ $\forall i = 2,\ldots,n$.

\begin{thm} \label{thm:reversible}
Given a set of time-reversible steps $\hat{A}_i$, the integration scheme
\begin{equation*}
	\hat{M} = \mathcal{L} \prod_{i=1}^{n} \hat{A}_i
\end{equation*}
is time-reversible if $\hat{A}_i = \hat{A}_{n-i+1}\; \forall i$.
We denote such an integration scheme \emph{symmetric}.
\end{thm}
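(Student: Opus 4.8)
The plan is to write out the time-reversed scheme and the inverse scheme as explicit products of the atomic steps $\hat{A}_i$ and to check that the palindrome hypothesis is precisely what makes the two products coincide.

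First I would pin down what the time reversal of $\hat{M}$ is. Reversing the scheme negates every atomic step-size at once, and since each $\hat{A}_i$ is assumed time-reversible, this inverts each atom individually while leaving the order of composition untouched:
\[
  \hat{M}[-\epsilon] \;=\; \hat{A}_n^{-1}\,\hat{A}_{n-1}^{-1}\cdots\hat{A}_1^{-1},
\]
with the convention (as elsewhere in this appendix) that the rightmost factor acts first. On the other hand, because the integration steps form a group, inverting the \emph{composite} reverses the order of the factors:
\[
  \hat{M}^{-1}[\epsilon] \;=\; \hat{A}_1^{-1}\,\hat{A}_2^{-1}\cdots\hat{A}_n^{-1}.
\]

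Next I would compare the two products slot by slot. Counting factors from the left, slot $j$ of $\hat{M}[-\epsilon]$ holds $\hat{A}_{n+1-j}^{-1}$, whereas slot $j$ of $\hat{M}^{-1}[\epsilon]$ holds $\hat{A}_{j}^{-1}$; the two operator strings therefore agree for every $j$ exactly when $\hat{A}_{n+1-j}=\hat{A}_{j}$ for all $j$, i.e.\ under the stated condition $\hat{A}_i=\hat{A}_{n-i+1}$. Equivalently, one substitutes $\hat{A}_i=\hat{A}_{n-i+1}$ into the product for $\hat{M}^{-1}$ and reindexes by $k=n-i+1$. Either way this yields $\hat{M}[-\epsilon]=\hat{M}^{-1}[\epsilon]$, so $\hat{M}$ is time-reversible; the case of odd $n$ needs no separate treatment, since the central factor is then its own mirror image.

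The step that deserves care --- and really the whole content of the argument --- is the contrast in the second paragraph: negating the global step-size inverts each atomic step but \emph{keeps} their order, whereas inverting the whole map inverts each atom and \emph{reverses} their order (``shoes and socks''). The palindrome condition is exactly what cancels this discrepancy. Beyond that, the only thing one must be attentive to rather than mechanical is checking that reversal acts factor-by-factor on the composite and that, via the involution $P\mapsto-P$ noted earlier, the sign of the momentum may be dropped throughout; the remaining reindexing is routine.
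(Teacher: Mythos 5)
Your argument is correct: writing $\hat{M}[-\epsilon]$ as the same-ordered product of the individually reversed atoms and $\hat{M}^{-1}[\epsilon]$ as the order-reversed product of inverses, and noting that the palindrome condition $\hat{A}_i=\hat{A}_{n-i+1}$ makes the two strings coincide slot by slot, gives exactly $\hat{M}[-\epsilon]=\hat{M}^{-1}[\epsilon]$ under the paper's (momentum-sign-dropped) definition of reversibility. The paper states this theorem without proof, and your write-up is precisely the standard argument it implicitly relies on, so there is nothing to add.
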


\begin{thm}[Corollary] \label{thm:hamilton_rev}
A symplectic integration scheme
\begin{equation*}
	\hat{M}[h] = \hat{T}[a_{n+1}] \mathcal{L}\prod_{i=1}^{n} \left( \hat{S}[b_i] \hat{T}[a_i] \right),
\end{equation*}
is time-reversible if
\begin{IEEEeqnarray*}{rCl}
	a_i & = & a_{n-i+2}, \\
	b_i & = & b_{n-i+1}.
\end{IEEEeqnarray*}
\end{thm}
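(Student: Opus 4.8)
The plan is to derive this corollary directly from Theorem~\ref{thm:reversible} by exhibiting the symplectic scheme $\hat{M}[h]$ as a product $\mathcal{L}\prod \hat{A}_i$ of time-reversible atomic steps and then checking that the palindrome condition $\hat{A}_i = \hat{A}_{n-i+1}$ is equivalent to the stated relations on the coefficients $a_i$ and $b_i$. First I would expand the given product explicitly: writing out $\hat{M}[h] = \hat{T}[a_{n+1}]\,\hat{S}[b_n]\,\hat{T}[a_n]\cdots\hat{S}[b_1]\,\hat{T}[a_1]$, we see it is a sequence of $2n+1$ atomic steps, each of which is one of the $\hat{T}$ or $\hat{S}$ (i.e.\ $\hat{S}_i$) updates and hence time-reversible by the remarks following \eqref{eq:time_update}. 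So I would set $m = 2n+1$ and relabel these $2n+1$ factors as $\hat{A}_1,\ldots,\hat{A}_m$ with $\hat{A}_1 = \hat{T}[a_1]$, $\hat{A}_2 = \hat{S}[b_1]$, $\hat{A}_3 = \hat{T}[a_2]$, and so on, so that the odd-indexed factors are the $\hat{T}$-steps $\hat{A}_{2j-1} = \hat{T}[a_j]$ for $j=1,\ldots,n+1$ and the even-indexed factors are the $\hat{S}$-steps $\hat{A}_{2j} = \hat{S}[b_j]$ for $j=1,\ldots,n$.

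Next I would apply Theorem~\ref{thm:reversible} with this $m$: the scheme is time-reversible provided $\hat{A}_i = \hat{A}_{m-i+1}$ for all $i$, i.e.\ $\hat{A}_i = \hat{A}_{2n+2-i}$. I would then observe that the index map $i \mapsto 2n+2-i$ preserves parity (it changes $i$ by an even amount $2n+2-2i$), so $\hat{T}$-steps are matched with $\hat{T}$-steps and $\hat{S}$-steps with $\hat{S}$-steps; the pairing is therefore consistent with the structure. Translating into the $a$- and $b$-labels: an odd index $i = 2j-1$ maps to $2n+2-(2j-1) = 2(n-j+1)+1$, which is the odd index corresponding to $\hat{T}[a_{n-j+2}]$, so the condition on $\hat{T}$-factors reads $a_j = a_{n-j+2}$ for all relevant $j$; an even index $i = 2j$ maps to $2n+2-2j = 2(n-j+1)$, corresponding to $\hat{S}[b_{n-j+1}]$, so the condition on $\hat{S}$-factors reads $b_j = b_{n-j+1}$. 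These are exactly the hypotheses of the corollary, and since $\hat{T}[a] = \hat{T}[a']$ iff $a = a'$ (and similarly for $\hat{S}$, given the uniqueness conventions on the $b_i$), the coefficient relations are not just sufficient but the natural way to enforce the symmetry. Invoking Theorem~\ref{thm:reversible} then yields time-reversibility of $\hat{M}[h]$.

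I do not expect a serious obstacle here, since this is essentially a bookkeeping argument once Theorem~\ref{thm:reversible} is in hand; the only mildly delicate point is getting the index arithmetic right and making sure the parity-preservation observation is stated cleanly, so that the reader sees why the single palindrome condition on the $2n+1$ atomic steps splits into the two separate conditions on $\{a_i\}$ and $\{b_i\}$. One should also note for completeness that area preservation of $\hat{M}[h]$ is immediate from the earlier subsection, since $\hat{M}[h]$ is a product of the area-preserving steps $\hat{S}_i$ and $\hat{T}$, so together with the above the scheme satisfies both requirements needed for a valid HMC integrator.
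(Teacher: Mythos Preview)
Your proposal is correct and follows exactly the route the paper intends: the statement is presented as a corollary of Theorem~\ref{thm:reversible} with no separate proof, and your argument simply spells out the bookkeeping --- relabelling the $2n+1$ alternating $\hat{T}/\hat{S}$ factors and checking that the palindrome condition $\hat{A}_i=\hat{A}_{m-i+1}$ splits by parity into $a_j=a_{n-j+2}$ and $b_j=b_{n-j+1}$ --- that the paper leaves implicit.
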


This allows one to determine whether a given integration scheme is time-reversible by trying to write it as a symmetric product of time-reversible steps. For example, the second order minimal-norm space-time-space integration step is
\begin{equation}
	\hat{M}_{\mathrm{2MNSTS}}(\epsilon) = \hat{S}[\lambda \epsilon]\, \hat{T}\left[\frac{\epsilon}{2}\right]\, \hat{S}[(1-2\lambda)\epsilon]\, \hat{T}\left[\frac{\epsilon}{2}\right]\, \hat{S}[\lambda \epsilon], \label{eq:2MNSTS}
\end{equation}
and this is time-reversible by virtue of being symmetric (Theorem \ref{thm:reversible}).

In order to easily generalize to multi-step schemes, we note that
if $\hat{M}$ is time-reversible, then so is $\hat{M}^n$.
This means that, for example, an $n$-step second-order minimal-norm space-time-space integration scheme is time-reversible and area preserving (due to being symplectic).

\subsection{The generalized multi-scale integrator}
To define the generalized multi-scale integrator, we need to introduce a new operator.

Consider integrating a Hamiltonian $H = T + \sum_i S_i$ with several action terms.
Note that regardless of how many action terms $S_i$ we have, we only ever have one kind of `time' update $\hat{T}$.
Thus, supposing we only integrate in one direction with all $\epsilon \geq 0$ or $\epsilon \leq 0$, it makes sense to parametrize the progress of $\hat{T}[\epsilon]$ updates via a time parameter $\mu$ that ranges from $0$ to $h$.
This time parameter can be attached to the force updates
\begin{equation}
	\hat{M}[\epsilon] \rightarrow \hat{M}[\epsilon, \mu = \tau],
\end{equation}
which does not affect the action of the integration step, but
it does allow one to define a useful operator:

\begin{dfn}
The \emph{time-step insertion operator} $\mathcal{T}_A^B$ acting on some product of operators with assigned time parameters is defined as
\begin{IEEEeqnarray}{rCl}
	\IEEEeqnarraymulticol{3}{l}{
		\mathcal{T}_A^B \prod_{i=1}^{n} \hat{M}_i[\epsilon_i, \mu = \tau_i]
	}\nonumber\\ \quad
	& \equiv & \hat{T}[\tau'_{n+1} - \tau'_n] \mathcal{L}\prod_{i=1}^{n} \left( \hat{M}_i[\epsilon_i] \hat{T}[\tau'_i - \tau'_{i-1}] \right)
\end{IEEEeqnarray}
where the time parameters have been reordered $\{\tau_i\} = \{\tau'_i\}$ such that $\tau'_i \leq \tau'_{i+1}$ $\forall i=1,\ldots,n$, and we define $\tau'_0 = A$ and $\tau'_{n+1} = B$.
Typically, we choose $A = 0$ and $B = h$, and this will be written as $\mathcal{T} = \mathcal{T}_0^h$. See Figure~\ref{fig:time-step_insert} for a depiction of this operator in action.
\end{dfn}

\begin{figure}[tbp]
\centering
\tikzset{ubergenstyle/.style={
scale=1.5,
every text node part/.style={font=\footnotesize}
}}
\begin{tikzpicture}
\matrix[row sep=0.25cm]{

\begin{scope}[ubergenstyle]
\path (2,1.2) node[anchor=south]
{$\mathcal{T} \left\lbrace \hat{S}[\frac{h}{2}, \mu = \frac{h}{4}]\, \hat{S}[\frac{h}{2}, \mu = \frac{3h}{4}] \right\rbrace$};

\draw[->, help lines] (-0.5,0) -- (4.5,0)
node[pos=1, anchor=south] {$\tau$};

\path[help lines] (0,-0.7em) node[anchor=base] {$0$}
(1,-0.7em) node[anchor=base] {$\frac{h}{4}$}
(3,-0.7em) node[anchor=base] {$\frac{3h}{4}$}
(4,-0.7em) node[anchor=base] {$h$};

\draw[->, thick] (1,0) -- (1,1)
node[pos=0.7, anchor=east] {$\hat{S}[\frac{h}{2}]$};
\draw[->, thick] (3,0) -- (3,1)
node[pos=0.7, anchor=east] {$\hat{S}[\frac{h}{2}]$};

\end{scope}

 \\

\begin{scope}[ubergenstyle]
\path(2,0) node[anchor=base]{$=$};
\end{scope}

 \\

\begin{scope}[ubergenstyle]
\path (2,-0.4) node[anchor=north]
{$\hat{T}[\frac{h}{4}]\, \hat{S}[\frac{h}{2}]\,
\hat{T}[\frac{h}{2}]\, \hat{S}[\frac{h}{2}]\,
\hat{T}[\frac{h}{4}]$};

\draw[->, help lines] (-0.5,0) -- (4.5,0)
node[pos=1, anchor=south] {$\tau$};

\path[help lines] (0,-0.7em) node[anchor=base] {$0$}
(1,-0.7em) node[anchor=base] {$\frac{h}{4}$}
(3,-0.7em) node[anchor=base] {$\frac{3h}{4}$}
(4,-0.7em) node[anchor=base] {$h$};

\draw[->, thick] (1,0) -- (1,1)
node[pos=0.7, anchor=east] {$\hat{S}[\frac{h}{2}]$};
\draw[->, thick] (3,0) -- (3,1)
node[pos=0.7, anchor=east] {$\hat{S}[\frac{h}{2}]$};

\draw[->, thick, red] (0,0) -- (1,0)
node[pos=0.5, anchor=south] {$\hat{T}[\frac{h}{4}]$};
\draw[->, thick, red] (1,0) -- (3,0)
node[pos=0.5, anchor=south] {$\hat{T}[\frac{h}{2}]$};
\draw[->, thick, red] (3,0) -- (4,0)
node[pos=0.5, anchor=south] {$\hat{T}[\frac{h}{4}]$};

\end{scope}

 \\
}; % End of TikZ matrix
\end{tikzpicture}
\caption{The time-step insertion operator $\mathcal{T}$.
It works by inserting time steps $\hat{T}$ between each space step update $\hat{S}$ and the given temporal endpoints.
Here, we depict some space steps as vectors in the upward direction based at particular points on temporal axis $\tau$ (top), then apply the time insertion operator by inserting time steps along the horizontal (bottom).}
\label{fig:time-step_insert}
\end{figure}
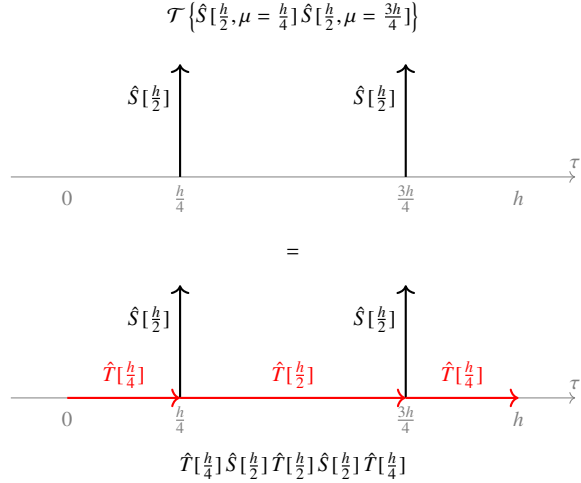

A symplectic scheme \eqref{eq:gen_int_scheme} can thus be written as
\begin{equation}
	\hat{M}[h] = \mathcal{T} \prod_{i=1}^{n} \hat{S}[b_i, \mu = c_i]
\end{equation}
where $c_i = \sum_{j=1}^{i} a_j$.

This form of the symplectic integration scheme allows one to easily define the generalized multi-scale integrator:
\begin{dfn}
Consider a Hamiltonian $H = T + \sum_i S_i$ with several action terms.
Suppose that for each term $S_i$ we have a symplectic integration scheme
\begin{IEEEeqnarray*}{rCl}
	\hat{M}_i[h] & = & \mathcal{T} \prod_{j=1}^{n_i} \hat{S}_i[b_j^{(i)}, \mu = c_j^{(i)}]
\end{IEEEeqnarray*}
that preserves $H_i = T + S_i$.
Then the \emph{generalized multi-scale integrator} for the full Hamiltonian $H$ is given by
\begin{equation} \label{eq:uber_gen}
	\hat{M}_{\mathrm{gen}}[h] = \mathcal{T} \left( \prod_{i=1} \prod_{j=1}^{n_i} \hat{S}_i[b_j^{(i)}, \mu = c_j^{(i)}] \right).
\end{equation}
This construction is unambiguous: if there is a $c_i^{(k)} = c_j^{(l)}$, the order is insignificant since $[\hat{S}_k, \hat{S}_l] = 0$. Also note
that $\hat{M}_{\mathrm{gen}}$ is symplectic by construction, and hence area-preserving.
\end{dfn}

\subsubsection{Reversibility}
In order to show that the generalized multi-scale integrator is time-reversible, it is necessary to determine what reversibility looks like for a time-step inserted product of operators.
\begin{thm} \label{thm:time_insert_reversible}
Suppose we have an integration scheme
\begin{equation}
	\hat{M} = \mathcal{T}_A^B \prod_{i=1}^{n} \hat{A}_i[\mu = b_i].
\end{equation}
where $[\hat{A}_i, \hat{A}_j] = 0$ if $b_i = b_j$.
Then $\hat{M}$ is symmetric (and hence reversible) iff for every operator $\hat{A}[\mu = b]$ in the product, there is also an operator of the form $\hat{A}[\mu = A+B-b]$.
\end{thm}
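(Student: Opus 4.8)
The plan is to write $\hat{M}$ out as a concrete alternating product of time-reversible steps and then apply the symmetry criterion of Theorem~\ref{thm:reversible} (equivalently, its Corollary~\ref{thm:hamilton_rev}). List the distinct values occurring among $\{b_i\}$ as $t_1 < t_2 < \cdots < t_m$, and for each $j$ let $\hat{B}_j$ be the composition of all the $\hat{A}_i$ with $b_i = t_j$. The hypothesis $[\hat{A}_i,\hat{A}_j]=0$ whenever $b_i=b_j$ is exactly what makes $\hat{B}_j$ well defined independently of the order of multiplication, and since a composition of pairwise-commuting time-reversible steps is again time-reversible (the commutation lets one turn ``negate all step-sizes'' into ``reverse the order and invert each factor''), each $\hat{B}_j$ is time-reversible. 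Unwinding the definition of $\mathcal{T}_A^B$ on \eqref{eq:uber_gen} then gives the grouped form
\begin{equation*}
	\hat{M} \;=\; \hat{T}[B - t_m]\,\hat{B}_m\,\hat{T}[t_m - t_{m-1}]\,\hat{B}_{m-1}\cdots \hat{B}_1\,\hat{T}[t_1 - A],
\end{equation*}
a product of $2m+1$ time-reversible steps in which the blocks $\hat{B}_j$ and the time steps $\hat{T}$ strictly alternate.

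Now apply Theorem~\ref{thm:reversible} to this $(2m+1)$-fold product: $\hat{M}$ is symmetric exactly when the sequence of its $2m+1$ steps is a palindrome. Reading this off position by position, the end time steps force $t_1 - A = B - t_m$, the interior time steps force $t_{j+1}-t_j = t_{m-j+1}-t_{m-j}$ for $1 \le j \le m-1$, and the block positions force $\hat{B}_j = \hat{B}_{m+1-j}$ for all $j$. The time-step equalities telescope to $t_j + t_{m+1-j} = A+B$ for every $j$; conceptually this is just the statement that an order-reversing involution of the finite totally ordered set $\{t_1,\dots,t_m\}$ must be the reflection about the midpoint, i.e.\ $t\mapsto A+B-t$. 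Together, ``$t_j + t_{m+1-j} = A+B$ for all $j$'' plus ``$\hat{B}_j = \hat{B}_{m+1-j}$ for all $j$'' says precisely that the map $b\mapsto A+B-b$ carries the multiset $\{(\hat{A}_i,b_i)\}$ onto itself --- which is the stated condition that every operator $\hat{A}[\mu=b]$ be matched by one of the form $\hat{A}[\mu=A+B-b]$. For the ``if'' direction one reads these implications in reverse (given the matching condition, choose the pairing and observe the grouped form is a palindrome), and then Theorem~\ref{thm:reversible} also yields ``hence reversible''.

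The bookkeeping above is routine; the two places that need care are (i) that the block grouping is legitimate and order-independent, which is exactly what the commutation hypothesis $[\hat{A}_i,\hat{A}_j]=0$ for $b_i=b_j$ is there to supply --- without it the grouped form is not even well defined --- and (ii), for the ``only if'' direction, that one is entitled to test symmetry on the grouped form rather than on some other presentation of the same map. I would address (ii) by noting that the grouped expansion is the natural reduced representative of a $\mathcal{T}_A^B$-scheme (adjacent $\hat{T}$ steps merged, operators at a common $\mu$ collected), so that two such schemes agree iff they carry the same multiset of labelled steps with the same endpoints; the symmetry condition of Theorem~\ref{thm:reversible} then becomes the purely combinatorial palindrome condition handled above. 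This identification of ``symmetric'' with the combinatorial condition on the reduced form is the one step I expect to require the most care to state cleanly.
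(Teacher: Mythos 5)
Your proposal is correct and follows essentially the same route as the paper's proof: expand the time-step insertion operator in time order, apply the symmetric-product criterion of Theorem~\ref{thm:reversible}/\ref{thm:hamilton_rev}, and telescope the time-step equalities to obtain $b_i + b_{n-i+1} = A+B$, which translates into the stated matching condition. The only difference is cosmetic --- you merge coincident-time operators into commuting blocks (and flag the canonical-presentation issue explicitly), whereas the paper keeps the operators individually sorted with possibly zero intermediate $\hat{T}$ steps.
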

\begin{proof}
Under the action of the time-step insertion operator, we can rearrange the operators $\hat{A}_i$ such that $b_i \leq b_{i+1}$. Expanding the time-step insertion operator then gives:
\begin{equation*}
	\hat{M} = \hat{T}[b_{n+1} - b_n] \mathcal{L}\prod_{i=1}^{n} \left( \hat{A}_i \hat{T}[b_i - b_{i-1}] \right),
\end{equation*}
where $b_0 = A$ and $b_{n+1} = B$.
Using Theorem \ref{thm:hamilton_rev}, this is symmetric iff
\begin{equation} \label{eq:ts_proof_1}
	\hat{A}_i = \hat{A}_{n-i+1}
\end{equation}
and
\begin{equation*}
	b_i - b_{i-1} = b_{n-i+2} - b_{n-i+1} \quad \forall i = 0, \ldots, n+1
\end{equation*}
Rearranging the second condition gives
\begin{IEEEeqnarray*}{rCl}
	b_i + b_{n-i+1} & = & b_{i-1} + b_{n-i+2} \\
	& = & b_{i-2} + b_{n-i+3} \\
	& = & \ldots \\
	& = & b_0 + b_{n+1} = A + B \\
	\noalign{\noindent so \vspace{\jot}}
	b_i + b_{n-i+1} & = & A + B  \quad \forall i=0, \ldots, n+1 \IEEEyesnumber  \label{eq:ts_proof_2}
\end{IEEEeqnarray*}

The two conditions \eqref{eq:ts_proof_1} and \eqref{eq:ts_proof_2} together are equivalent to saying that for each the operator $\hat{A}_i[\mu = b_i]$ in the product, we also have $\hat{A}_{n-i+1}[\mu = b_{n-1+1}] = \hat{A}_i[\mu = A+B-b_i]$.
\end{proof}

\begin{thm}
If the constituent symplectic integrators of the generalized multi-scale integrator are symmetric, then the generalized multi-scale integrator is also symmetric and hence time-reversible.
\end{thm}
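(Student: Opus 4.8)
The plan is to reduce the claim to Theorem~\ref{thm:time_insert_reversible}. By \eqref{eq:uber_gen},
\[
	\hat{M}_{\mathrm{gen}}[h] = \mathcal{T}_0^h\left( \prod_i \prod_{j=1}^{n_i} \hat{S}_i[b_j^{(i)}, \mu = c_j^{(i)}] \right)
\]
is already presented as a time-step-inserted product $\mathcal{T}_0^h \prod_k \hat{A}_k[\mu = b_k]$, with the $\hat{A}_k$ ranging over the collection $\mathcal{S}$ consisting of all the $\hat{S}_i[b_j^{(i)}, \mu = c_j^{(i)}]$, where $i$ labels the action terms and $j = 1,\ldots,n_i$. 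To invoke Theorem~\ref{thm:time_insert_reversible} I must verify its commutativity hypothesis and then check that $\mathcal{S}$ is invariant under the reflection $\mu \mapsto h - \mu$ (the case $A = 0$, $B = h$).

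For the commutativity hypothesis: each space update $\hat{S}_i[\epsilon]$ alters only the momentum, sending $(P,U) \mapsto (P - \epsilon F_i(U), U)$ with $U$ untouched, so any two of them satisfy $[\hat{S}_i[a], \hat{S}_k[b]] = 0$ --- whether or not $i = k$ --- since both orderings produce $(P - aF_i(U) - bF_k(U), U)$. In particular any two operators in $\mathcal{S}$ that share an assigned time parameter commute, which is all that Theorem~\ref{thm:time_insert_reversible} requires (and is exactly the remark accompanying the definition of $\hat{M}_{\mathrm{gen}}$); it also legitimises treating $\mathcal{S}$ as an unordered collection.

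The main step is the reflection-invariance of $\mathcal{S}$. Because all the constituent schemes are integrators over the same step $h$, they share the endpoints $0$ and $h$. Each $\hat{M}_i[h]$ is symmetric by hypothesis, so Theorem~\ref{thm:time_insert_reversible} applied to $\hat{M}_i[h]$ shows that its own operator collection $\mathcal{S}_i = \{ \hat{S}_i[b_j^{(i)}, \mu = c_j^{(i)}] : j = 1,\ldots,n_i \}$ is invariant under $\mu \mapsto h - \mu$. Since $\mathcal{S} = \bigcup_i \mathcal{S}_i$ and a union of reflection-invariant collections is again reflection-invariant, $\mathcal{S}$ has the required property; note that the $n_i$ may differ from term to term --- the whole point of the generalized scheme --- but this is harmless, since the reflection acts within each $\mathcal{S}_i$ separately. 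Theorem~\ref{thm:time_insert_reversible} then gives that $\hat{M}_{\mathrm{gen}}[h]$ is symmetric, hence time-reversible; area preservation was already noted, as $\hat{M}_{\mathrm{gen}}$ is a product of $\hat{S}$ and $\hat{T}$ steps. I expect the only mildly delicate point to be the commutativity check for space updates belonging to distinct action terms; the rest is bookkeeping about the reflection $\mu \mapsto h - \mu$ acting on the union of the per-term insertion times.
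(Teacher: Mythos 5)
Your proposal is correct and follows essentially the same route as the paper: apply Theorem~\ref{thm:time_insert_reversible} in one direction to each symmetric constituent scheme to obtain reflection invariance of its insertion times, take the union, and apply the theorem in the other direction to conclude that $\hat{M}_{\mathrm{gen}}$ is symmetric and hence time-reversible. Your explicit check that space updates for distinct action terms commute is a small addition the paper relegates to the remark accompanying the definition, but it changes nothing substantive.
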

\begin{proof}
Consider the generalized multi-scale integrator \eqref{eq:uber_gen} with two schemes:
\begin{equation*}
	\hat{M}_{\mathrm{gen}}[h] = \mathcal{T} \left( \prod_{i=1}^{n_1} \hat{S}_1[b_i^1, \mu = c_i^1] \prod_{i=1}^{n_2} \hat{S}_2[b_i^2, \mu = c_i^2] \right)
\end{equation*}

By assumption, the two constituent schemes are symmetric. Hence, by Theorem \ref{thm:time_insert_reversible}, it follows that each $\hat{S}_1[b,\mu = c]$ has a mirror $\hat{S}_1[b,\mu = h-c]$ and each $\hat{S}_2[b,\mu = c]$ has a mirror $\hat{S}_2[b,\mu = h-c]$.

But that means \emph{every} operator $\hat{M}[b,\mu = c]$ in the product has a mirror $\hat{M}[b,\mu = h-c]$.
Hence, by Theorem \ref{thm:time_insert_reversible}, $\hat{M}_{\mathrm{gen}}$ is symmetric and reversible. This extends trivially to an arbitrary number of schemes.
\end{proof}

\subsection{Implementation}

\subsubsection{Example algorithm} \label{sec:uber_gen_alg_ex}
The integration scheme for an individual action term $S_i$ can be expressed
as two arrays: \verb+T_steps+ which holds the time $\hat{T}$ updates and \verb+S_steps+ which holds the space $S_i$ updates, ordered such that the
scheme can be enacted by a simple loop:
\begin{verbatim}
for i in (1,length(T_steps)):
    integrate_T(step=T_steps[i])
    integrate_S_i(step=S_steps[i])
\end{verbatim}
For example, a 2-step leapfrog algorithm
\begin{equation}
	\hat{A}[h] = \hat{S}[h/4] \hat{T}[h/2] \hat{S}[h/4] \hat{T}[h/2] \hat{S}[h/4]
\end{equation}
can be expressed as
\begin{verbatim}
T_steps = (0, h/2, h/2)
S_steps = (h/4, h/2, h/4)
\end{verbatim}

To implement the generalized  multi-scale integrator for an action with $n$ terms, we can combine the $2n$ arrays as follows:
\begin{verbatim}
n = <number of S_i action terms>
T_steps_i = [<T steps for int method 1>, ...]
S_steps_i = [<S steps for int method 1>, ...]

pop() = remove first element of the array

new_T_steps = []
new_Si_steps = [[]]
tau = 0
d_tau = 0
while tau < traj_length {
  # Find the smallest time step
  # out of the potential next ones
  d_tau = min(T_steps_i[:][1])
  # Add the new time step
  new_T_steps.append(d_tau)
  for i in (1,n) {
    # If it is time to insert a 'S_i' step
    if d_tau == T_steps_i[i][1] {
      # Add the next space step to the new list 
      new_Si_steps[i].append(S_steps_i[i][1])
      # Remove the time and space steps
      # from the old lists
      S_steps_i.pop()
      T_steps_i[i].pop()
    } else {
      # Add a 'do nothing' step to the new list
      new_Si_steps[i].append(0)
      # Decrement the next time step
      # (moving forward in time)
      T_steps_i[i][1] -= d_tau
    }
  }
  # ! At this stage, each individual scheme will
  # ! be up to time t = tau + d_tau
  tau += d_tau
}
\end{verbatim}
The generalized multi-scale integrator can then be enacted via
\begin{verbatim}
for i in (1,length(new_T_steps)):
    integrate_T(step=new_T_steps[i])
    for j in (1,n):
      integrate_S_j(step=new_Si_steps[j][i])
\end{verbatim}
Note that the order of the action updates $\hat{S}_j$ in the inner loop does not matter since $[\hat{S}_i, \hat{S}_j] = 0$.

\subsubsection{Algorithm demonstration} \label{sec:uber_gen_alg_demo}
Suppose we choose to use a 3-step leapfrog and a 1-step second-order minimal norm scheme:
\begin{IEEEeqnarray*}{rCl}
	\hat{M}_1[h] & = & \hat{S}_1 [h/6] \hat{T}[h/3] \hat{S}_1 [h/3] \hat{T}[h/3] \hat{S}_1 [h/3] \hat{T}[h/3] \hat{S}_1 [h/6], \\
	\hat{M}_2[h] & = & \hat{S}_2 [\lambda h] \hat{T}[h/2] \hat{S}_2 [(1-2\lambda) h] \hat{T}[h/2] \hat{S}_2 [\lambda h].
\end{IEEEeqnarray*}
These can be written in array form as
\begin{IEEEeqnarray*}{rCl}
T_1 & = & (0,h/3,h/3,h/3), \\
S_1 & = & (h/6, h/3, h/3, h/6), \\
T_2 & = & (0,h/2,h/2), \\
S_2 & = & (\lambda h, (1-2\lambda)h, \lambda h).
\end{IEEEeqnarray*}
Merging these two schemes by hand (see Figure \ref{fig:gen_int_2}) shows that
the resultant scheme should take the form
\begin{IEEEeqnarray*}{rCl}
T' & = & (0, h/3, h/6, h/6, h/3), \\
S'_1 & = & (h/6, h/3, 0, h/3, h/6), \\
S'_2 & = & (\lambda h, 0, (1-2\lambda)h, 0, \lambda h).
\end{IEEEeqnarray*}
We step through the algorithm in Figure \ref{fig:gen_alg_demo} to show that
it indeed produces this result.

\newgeometry{left=2cm,right=2cm,top=2cm}
\begin{figure*}[btp]
\centering
\tikzset{ubergenstyle/.style={
scale=1.5,
every text node part/.style={font=\footnotesize}
}}

\begin{tikzpicture}[ubergenstyle]

\matrix [row sep=0.5cm, column sep=0.5cm] {
\begin{scope}[ubergenstyle]
\path (2,1.5) node[anchor=south]
{$\hat{S}[\frac{h}{6}]\, \hat{T}[\frac{h}{3}]\,
\hat{S}[\frac{h}{3}]\, \hat{T}[\frac{h}{3}]\,
\hat{S}[\frac{h}{3}]\, \hat{T}[\frac{h}{3}]\,
\hat{S}[\frac{h}{6}]$};

\draw[->, help lines] (-0.5,0) -- (4.5,0)
node[pos=1, anchor=south] {$\tau$};

\path[help lines] (0,-0.7em) node[anchor=base] {$0$}
(4/3,-0.7em) node[anchor=base] {$\frac{h}{3}$}
(8/3,-0.7em) node[anchor=base] {$\frac{2h}{3}$}
(4,-0.7em) node[anchor=base] {$h$};

\draw[->, thick] (0,0) -- (0,2/3)
node[pos=0.7, anchor=east] {$\hat{S}_1[\frac{h}{6}]$};
\draw[->, thick] (4/3,0) -- (4/3,4/3)
node[pos=0.7, anchor=east] {$\hat{S}_1[\frac{h}{3}]$};
\draw[->, thick] (8/3,0) -- (8/3,4/3)
node[pos=0.7, anchor=east] {$\hat{S}_1[\frac{h}{3}]$};
\draw[->, thick] (4,0) -- (4,2/3)
node[pos=0.7, anchor=east] {$\hat{S}_1[\frac{h}{6}]$};

\draw[->, thick] (0,0) -- (4/3,0)
node[pos=0.5, anchor=south] {$\hat{T}[\frac{h}{3}]$};
\draw[->, thick] (4/3,0) -- (8/3,0)
node[pos=0.5, anchor=south] {$\hat{T}[\frac{h}{3}]$};
\draw[->, thick] (8/3,0) -- (4,0)
node[pos=0.5, anchor=south] {$\hat{T}[\frac{h}{3}]$};
\end{scope}

 & \node {$=$}; &
 
\begin{scope}[ubergenstyle]
\path (0,0) node[anchor=west, align=center, fill=blue!10!white,
minimum width=6cm]
{$\hat{S}[\frac{h}{6}]\, \hat{T}[\frac{h}{3}]\,
\hat{S}[\frac{h}{3}]\, \hat{T}[\frac{h}{3}]\,
\hat{S}[\frac{h}{3}]\, \hat{T}[\frac{h}{3}]\,
\hat{S}[\frac{h}{6}]$
\\[1ex]
\verb+T_steps+ $= (0, h/3, h/3, h/3)$ \\
\verb+S_steps+ $= (h/6, h/3, h/3, h/6)$
};
\end{scope}

 \\

\begin{scope}[ubergenstyle]
\draw[double distance=1ex, arrows={-Implies}] (2,0) -- (2,-1);
\end{scope}

 \\

\begin{scope}[ubergenstyle]

\draw[->, help lines] (-0.5,0) -- (4.5,0)
node[pos=1, anchor=south] {$\tau$};

\path[help lines] (0,-0.7em) node[anchor=base] {$0$}
(4/3,-0.7em) node[anchor=base] {$\frac{h}{3}$}
(2,-0.7em) node[anchor=base] {$\frac{h}{2}$}
(8/3,-0.7em) node[anchor=base] {$\frac{2h}{3}$}
(4,-0.7em) node[anchor=base] {$h$};

\draw[->, thick] (0,0) -- (0,2/3)
node[pos=0.7, anchor=east] {$\hat{S}_1[\frac{h}{6}]$};
\draw[->, thick] (4/3,0) -- (4/3,4/3)
node[pos=0.7, anchor=east] {$\hat{S}_1[\frac{h}{3}]$};
\draw[->, thick] (8/3,0) -- (8/3,4/3)
node[pos=0.7, anchor=east] {$\hat{S}_1[\frac{h}{3}]$};
\draw[->, thick] (4,0) -- (4,2/3)
node[pos=0.7, anchor=east] {$\hat{S}_1[\frac{h}{6}]$};

\draw[->, thick] (0,0) -- (0,-0.9)
node[pos=0.5, anchor=east] {$\hat{S}_2[\lambda h]$};
\draw[->, thick] (2,0) -- (2,-1.2)
node[pos=0.5, anchor=east] {$\hat{S}_2[(1-2\lambda)h]$};
\draw[->, thick] (4,0) -- (4,-0.9)
node[pos=0.5, anchor=east] {$\hat{S}_2[\lambda h]$};

\draw[->, red, thick] (0,0) -- (4/3,0)
node[pos=0.5, anchor=south] {$\hat{T}[\frac{h}{3}]$};
\draw[->, red, thick] (4/3,0) -- (2,0)
node[pos=0.5, anchor=south] {$\hat{T}[\frac{h}{6}]$};
\draw[->, red, thick] (2,0) -- (8/3,0)
node[pos=0.5, anchor=south] {$\hat{T}[\frac{h}{6}]$};
\draw[->, red, thick] (8/3,0) -- (4,0)
node[pos=0.5, anchor=south] {$\hat{T}[\frac{h}{3}]$};
\end{scope}

 &
 
\begin{scope}[ubergenstyle]
\draw[double distance=1ex, arrows={-Implies}] (-0.5,0) -- (0.5,0);
\end{scope}

 &
 
\begin{scope}[ubergenstyle]
\path (0,0) node[anchor=west, align=center, fill=blue!10!white,
minimum width=6cm]
{$\hat{S}_1[\frac{h}{6}] \hat{S}_2[\lambda h] \hat{T}[\frac{h}{3}]
\hat{S}_1[\frac{h}{3}] \hat{T}[\frac{h}{6}]$
\\
$\hat{S}_2[(1-2\lambda)h] \hat{T}[\frac{h}{6}]
\hat{S}_1[\frac{h}{3}] \hat{T}[\frac{h}{3}]
\hat{S}_1[\frac{h}{6}] \hat{S}_2[\lambda h]$
\\[1ex]
\verb+T_steps+ $= (0, h/3, h/6, h/6, h/3)$ \\
\verb+S_steps_1+ $= (h/6, h/3, 0, h/3, h/6)$ \\
\verb+S_steps_2+ $= (\lambda h, 0, (1-2\lambda) h, 0, \lambda h)$
};
\end{scope}

 \\

\begin{scope}[ubergenstyle]
\draw[double distance=1ex, arrows={-Implies}] (2,0) -- (2,1);
\end{scope}

 \\

\begin{scope}[ubergenstyle]

\path (2,-1.5) node[anchor=north]
{$\hat{S}[\lambda h]\, \hat{T}[\frac{h}{2}]\,
\hat{S}[(1-2\lambda)h]\, \hat{T}[\frac{h}{2}]\,
\hat{S}[\lambda h]$};

\draw[->, help lines] (-0.5,0) -- (4.5,0)
node[pos=1, anchor=south] {$\tau$};

\path[help lines] (0,0.3em) node[anchor=base] {$0$}
(2,0.3em) node[anchor=base] {$\frac{h}{2}$}
(4,0.3em) node[anchor=base] {$h$};

\draw[->, thick] (0,0) -- (0,-0.9)
node[pos=0.5, anchor=east] {$\hat{S}_2[\lambda h]$};
\draw[->, thick] (2,0) -- (2,-1.2)
node[pos=0.5, anchor=east] {$\hat{S}_2[(1-2\lambda)h]$};
\draw[->, thick] (4,0) -- (4,-0.9)
node[pos=0.5, anchor=east] {$\hat{S}_2[\lambda h]$};

\draw[->, thick] (0,0) -- (2,0)
node[pos=0.5, anchor=south] {$\hat{T}[\frac{h}{2}]$};
\draw[->, thick] (2,0) -- (4,0)
node[pos=0.5, anchor=south] {$\hat{T}[\frac{h}{2}]$};
\end{scope}

 & \node {$=$}; &

\begin{scope}[ubergenstyle]
\path (0,0) node[anchor=west, align=center, fill=blue!10!white,
minimum width=6cm]
{$\hat{S}[\lambda h]\, \hat{T}[\frac{h}{2}]\,
\hat{S}[(1-2\lambda)h]\, \hat{T}[\frac{h}{2}]\,
\hat{S}[\lambda h]$
\\[1ex]
\verb+T_steps+ $= (0, h/2, h/2)$ \\
\verb+S_steps+ $= (\lambda h, (1-2\lambda) h, \lambda h)$
};
\end{scope}

 \\
};

\end{tikzpicture}
\caption{A generalized multi-scale integrator demonstration.
Initially, $S_1$ is integrated with a 3-step leapfrog integrator (top) whilst $S_2$ uses a 1-step second-order minimal norm (bottom). To combine these two into a generalized multi-scale scheme, the two schemes are overlapped based on the `time' axis $\hat{T}$ (centre), then the time steps are recalculated based on where each `space' update $\hat{S}_i$ takes place.
Each scheme is also expressed by an array of time steps \texttt{T\_steps} and one or more arrays of space steps \texttt{S\_steps\_i}, which can be used in code as described in \ref{sec:uber_gen_alg_ex}.
}
\label{fig:gen_int_2}
\end{figure*}
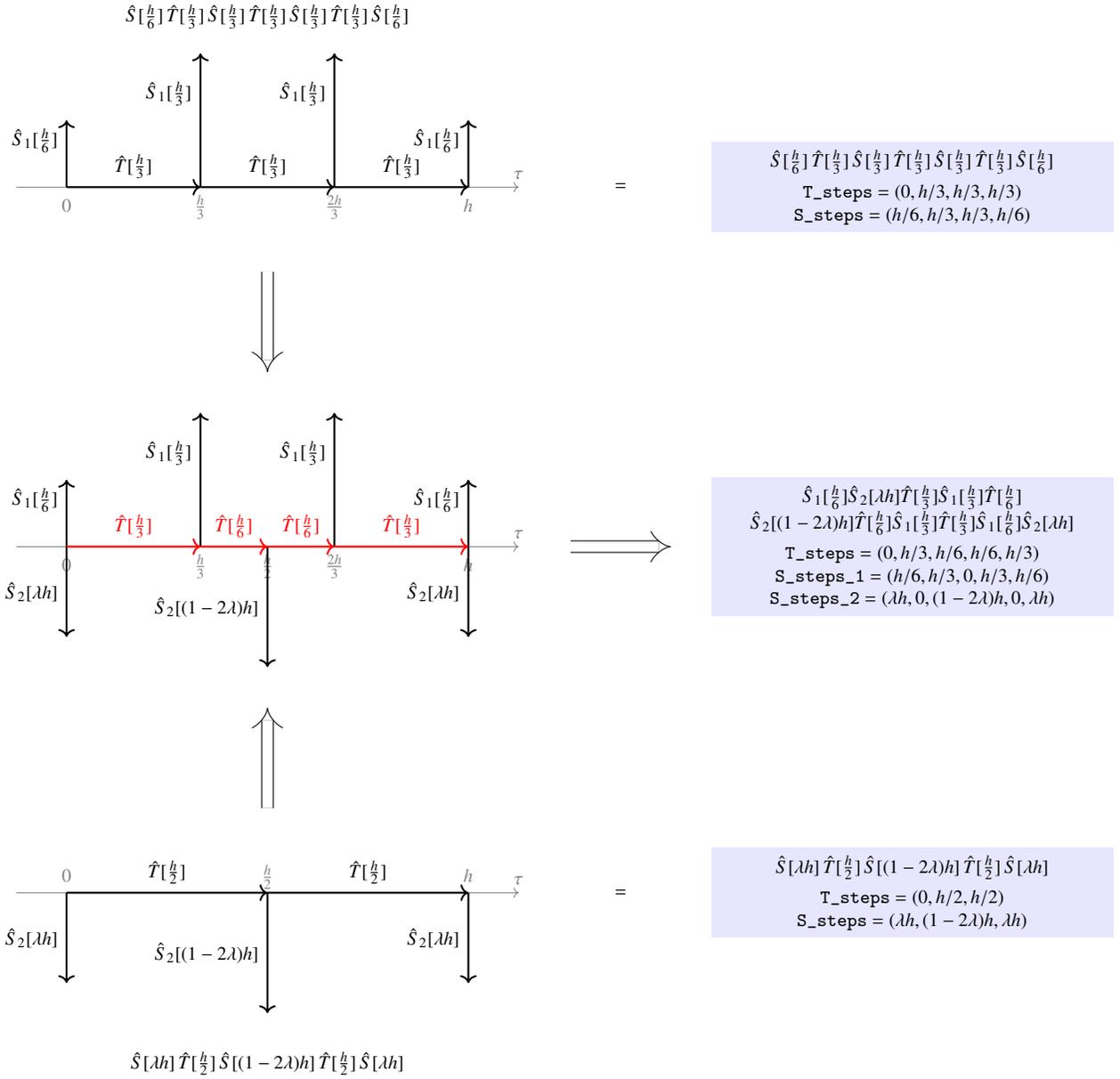
\restoregeometry

\newgeometry{left=2cm,right=2cm,top=2cm}
\begin{figure*}[htbp]
\centering
\begin{tikzpicture}[
variables/.style={
	rounded rectangle,
	minimum size=6mm,
	draw=#1!70,
	fill=#1!30,
},
variables/.default={blue},
]

\matrix [row sep = 0.5cm, column sep = 1cm, ampersand replacement=\&] {

\node (node1) [variables=red] {
$
\begin{array}{rl}
\multicolumn{2}{c}{\tau = 0 \quad \mathrm{(Initial\ state)}} \\[1ex]
T' = () & T_1 = (0,h/3,h/3,h/3)\\
& T_2 = (0,h/2,h/2) \\
S'_1 = () & S_1 = (h/6, h/3, h/3, h/6) \\
S'_2 = () & S_2 = (\lambda h, (1-2\lambda)h, \lambda h)
\end{array}
$
};

 \&

\begin{scope}
\draw[->, help lines] (-0.5,0) -- (4.5,0)
node[pos=1, anchor=south] {$\tau$};

\path[help lines] (0,-1em) node[anchor=base] {$0$}
(4/3,-1em) node[anchor=base] {$\frac{h}{3}$}
(2,-1em) node[anchor=base] {$\frac{h}{2}$}
(8/3,-1em) node[anchor=base] {$\frac{2h}{3}$}
(4,-1em) node[anchor=base] {$h$};

\draw[thin, red!50!white] (0,-3/2) -- (0,3/2)
node [pos=0.1, anchor=west] {$\tau = 0$};

\end{scope}

\\

\node (node2) [variables] {
$
\begin{array}{rl}
\multicolumn{2}{c}{\tau = 0} \\[1ex]
T' = (0) & T_1 = (h/3,h/3,h/3) \\
& T_2 = (h/2,h/2) \\
S'_1 = (h/6) & S_1 = (h/3, h/3, h/6) \\
S'_2 = (\lambda h) & S_2 = ((1-2\lambda)h, \lambda h) \\
\end{array}
$
};

 \&
 
\begin{scope}
\draw[->, help lines] (-0.5,0) -- (4.5,0)
node[pos=1, anchor=south] {$\tau$};

\path[help lines] (0,-1em) node[anchor=base] {$0$}
(4/3,-1em) node[anchor=base] {$\frac{h}{3}$}
(2,-1em) node[anchor=base] {$\frac{h}{2}$}
(8/3,-1em) node[anchor=base] {$\frac{2h}{3}$}
(4,-1em) node[anchor=base] {$h$};

\draw[thin, red!50!white] (0,-3/2) -- (0,3/2)
node [pos=0.1, anchor=west] {$\tau = 0$};

\draw[->, thick] (0,0) -- (0,2/3)
node[pos=0.7, anchor=west] {$\hat{S}_1[\frac{h}{6}]$};

\draw[->, thick] (0,0) -- (0,-0.9)
node[pos=0.5, anchor=west] {$\hat{S}_2[\lambda h]$};

\end{scope}
 
 \\

\node (node3) [variables] {
$
\begin{array}{rl}
\multicolumn{2}{c}{\tau = h/3} \\[1ex]
T' = (0,h/3) & T_1 = (h/3,h/3) \\
& T_2 = (h/6,h/2) \\
S'_1 = (h/6,h/3) & S_1 = (h/3, h/6) \\
S'_2 = (\lambda h,0) & S_2 = ((1-2\lambda)h, \lambda h)
\end{array}
$
};

 \&
 
\begin{scope}
\draw[->, help lines] (-0.5,0) -- (4.5,0)
node[pos=1, anchor=south] {$\tau$};

\draw[thin, red!50!white] (4/3,-3/2) -- (4/3,3/2)
node [pos=0.1, anchor=west] {$\tau = \frac{h}{3}$};

\path[help lines] (0,-1em) node[anchor=base] {$0$}
(4/3,-1em) node[anchor=base] {$\frac{h}{3}$}
(2,-1em) node[anchor=base] {$\frac{h}{2}$}
(8/3,-1em) node[anchor=base] {$\frac{2h}{3}$}
(4,-1em) node[anchor=base] {$h$};

\draw[->, thick] (0,0) -- (0,2/3);
%node[pos=0.7, anchor=east] {$\hat{S}_1[\frac{h}{6}]$};
\draw[->, thick] (4/3,0) -- (4/3,4/3)
node[pos=0.7, anchor=west] {$\hat{S}_1[\frac{h}{3}]$};

\draw[->, thick] (0,0) -- (0,-0.9);
%node[pos=0.5, anchor=east] {$\hat{S}_2[\lambda h]$};

\draw[->, red, thick] (0,0) -- (4/3,0)
node[pos=0.5, anchor=south] {$\hat{T}[\frac{h}{3}]$};
\end{scope}
 
 \\

\node (node4) [variables] {
$
\begin{array}{rl}
\multicolumn{2}{c}{\tau = h/2} \\[1ex]
T' = (0,h/3,h/6) & T_1 = (h/6,h/3) \\
& T_2 = (h/2)  \\
S'_1 = (h/6,h/3,0) & S_1 = (h/3, h/6) \\
S'_2 = (\lambda h,0,(1-2\lambda)h) & S_2 = (\lambda h)
\end{array}
$
};

 \&

\begin{scope}
\draw[->, help lines] (-0.5,0) -- (4.5,0)
node[pos=1, anchor=south] {$\tau$};

\draw[thin, red!50!white] (2,-3/2) -- (2,3/2)
node [pos=0.1, anchor=west] {$\tau = \frac{h}{2}$};

\path[help lines] (0,-1em) node[anchor=base] {$0$}
(4/3,-1em) node[anchor=base] {$\frac{h}{3}$}
(2,-1em) node[anchor=base] {$\frac{h}{2}$}
(8/3,-1em) node[anchor=base] {$\frac{2h}{3}$}
(4,-1em) node[anchor=base] {$h$};

\draw[->, thick] (0,0) -- (0,2/3);
%node[pos=0.7, anchor=east] {$\hat{S}_1[\frac{h}{6}]$};
\draw[->, thick] (4/3,0) -- (4/3,4/3);
%node[pos=0.7, anchor=east] {$\hat{S}_1[\frac{h}{3}]$};

\draw[->, thick] (0,0) -- (0,-0.9);
%node[pos=0.5, anchor=east] {$\hat{S}_2[\lambda h]$};
\draw[->, thick] (2,0) -- (2,-1.2)
node[pos=0.5, anchor=west] {$\hat{S}_2[(1-2\lambda)h]$};

\draw[->, red, thick] (0,0) -- (4/3,0);
%node[pos=0.5, anchor=south] {$\hat{T}[\frac{h}{3}]$};
\draw[->, red, thick] (4/3,0) -- (2,0)
node[pos=0.7, anchor=south] {$\hat{T}[\frac{h}{6}]$};
\end{scope}
 
 \\

\node (node5) [variables] {
$
\begin{array}{rl}
\multicolumn{2}{c}{\tau = 2h/3} \\[1ex]
T' = (0,h/3,h/6,h/6) & T_1 = (h/3) \\
& T_2 = (h/3) \\
S'_1 = (h/6,h/3,0,h/3) & S_1 = (h/6)  \\
S'_2 = (\lambda h,0,(1-2\lambda)h,0) & S_2 = (\lambda h)
\end{array}
$
};

 \&

\begin{scope}
\draw[->, help lines] (-0.5,0) -- (4.5,0)
node[pos=1, anchor=south] {$\tau$};

\draw[thin, red!50!white] (8/3,-3/2) -- (8/3,3/2)
node [pos=0.1, anchor=west] {$\tau = \frac{2h}{3}$};

\path[help lines] (0,-1em) node[anchor=base] {$0$}
(4/3,-1em) node[anchor=base] {$\frac{h}{3}$}
(2,-1em) node[anchor=base] {$\frac{h}{2}$}
(8/3,-1em) node[anchor=base] {$\frac{2h}{3}$}
(4,-1em) node[anchor=base] {$h$};

\draw[->, thick] (0,0) -- (0,2/3);
%node[pos=0.7, anchor=east] {$\hat{S}_1[\frac{h}{6}]$};
\draw[->, thick] (4/3,0) -- (4/3,4/3);
%node[pos=0.7, anchor=east] {$\hat{S}_1[\frac{h}{3}]$};
\draw[->, thick] (8/3,0) -- (8/3,4/3)
node[pos=0.7, anchor=west] {$\hat{S}_1[\frac{h}{3}]$};

\draw[->, thick] (0,0) -- (0,-0.9);
%node[pos=0.5, anchor=east] {$\hat{S}_2[\lambda h]$};
\draw[->, thick] (2,0) -- (2,-1.2);
%node[pos=0.5, anchor=east] {$\hat{S}_2[(1-2\lambda)h]$};

\draw[->, red, thick] (0,0) -- (4/3,0);
%node[pos=0.5, anchor=south] {$\hat{T}[\frac{h}{3}]$};
\draw[->, red, thick] (4/3,0) -- (2,0);
%node[pos=0.5, anchor=south] {$\hat{T}[\frac{h}{6}]$};
\draw[->, red, thick] (2,0) -- (8/3,0)
node[pos=0.3, anchor=south] {$\hat{T}[\frac{h}{6}]$};
\end{scope}
 
 \\

\node (node6) [variables=green] {
$
\begin{array}{rl}
\multicolumn{2}{c}{\tau = h} \\[1ex]
T' = (0,h/3,h/6,h/6,h/3) & T_1 = () \\
& T_2 = () \\  
S'_1 = (h/6,h/3,0,h/3,h/6) & S_1 = () \\
S'_2 = (\lambda h,0,(1-2\lambda)h,0,\lambda h) & S_2 = ()
\end{array}
$
};

 \&

\begin{scope}
\draw[->, help lines] (-0.5,0) -- (4.5,0)
node[pos=1, anchor=south] {$\tau$};

\draw[thin, red!50!white] (4,-3/2) -- (4,3/2)
node [pos=0.1, anchor=west] {$\tau = h$};

\path[help lines] (0,-1em) node[anchor=base] {$0$}
(4/3,-1em) node[anchor=base] {$\frac{h}{3}$}
(2,-1em) node[anchor=base] {$\frac{h}{2}$}
(8/3,-1em) node[anchor=base] {$\frac{2h}{3}$}
(4,-1em) node[anchor=base] {$h$};

\draw[->, thick] (0,0) -- (0,2/3);
%node[pos=0.7, anchor=east] {$\hat{S}_1[\frac{h}{6}]$};
\draw[->, thick] (4/3,0) -- (4/3,4/3);
%node[pos=0.7, anchor=east] {$\hat{S}_1[\frac{h}{3}]$};
\draw[->, thick] (8/3,0) -- (8/3,4/3);
%node[pos=0.7, anchor=east] {$\hat{S}_1[\frac{h}{3}]$};
\draw[->, thick] (4,0) -- (4,2/3)
node[pos=0.7, anchor=west] {$\hat{S}_1[\frac{h}{6}]$};

\draw[->, thick] (0,0) -- (0,-0.9);
%node[pos=0.5, anchor=east] {$\hat{S}_2[\lambda h]$};
\draw[->, thick] (2,0) -- (2,-1.2);
%node[pos=0.5, anchor=east] {$\hat{S}_2[(1-2\lambda)h]$};
\draw[->, thick] (4,0) -- (4,-0.9)
node[pos=0.5, anchor=west] {$\hat{S}_2[\lambda h]$};

\draw[->, red, thick] (0,0) -- (4/3,0);
%node[pos=0.5, anchor=south] {$\hat{T}[\frac{h}{3}]$};
\draw[->, red, thick] (4/3,0) -- (2,0);
%node[pos=0.5, anchor=south] {$\hat{T}[\frac{h}{6}]$};
\draw[->, red, thick] (2,0) -- (8/3,0);
%node[pos=0.5, anchor=south] {$\hat{T}[\frac{h}{6}]$};
\draw[->, red, thick] (8/3,0) -- (4,0)
node[pos=0.5, anchor=south] {$\hat{T}[\frac{h}{3}]$};
\end{scope}

 \\
};

\path (node1) edge[->] (node2)
	(node2) edge[->] (node3)
	(node3) edge[->] (node4)
	(node4) edge[->] (node5)
	(node5) edge[->] (node6);

\end{tikzpicture}
\caption{Demonstration of the generalized multi-scale integrator algorithm (\ref{sec:uber_gen_alg_ex}).
The state of each variable at the start (top) and after each loop iteration is shown on the left,
and on the right we have the state of the constructed scheme at these points.
With each loop of the algorithm, we find the next point in time $\tau$ where we need to insert a space step, move to time $\tau$ with $\hat{T}$, then insert the space steps $\hat{S}_i$ which are at time $\tau$.
The algorithm is complete when we reach $\tau = h$ (bottom), giving the correct generalized multi-scale scheme (c.f. Figure \ref{fig:gen_int_2}).}
\label{fig:gen_alg_demo}
\end{figure*}
\restoregeometry

\subsection{Error terms}
A vital consideration for a generalized multi-scale integrator is how its error terms compare to those of the composite integration schemes.
For example, the leapfrog space-time-space scheme with step-size $h$ and $n = t/h$ steps has error term
\begin{equation}
	\tilde{H}_{LPF} - \hat{H} =
	h^2 \left( \frac{1}{12} [\hat{S}, [\hat{S},\hat{T}]]
	+ \frac{1}{24} [\hat{T},[\hat{S},\hat{T}]] \right) + \mathcal{O}(h^4)
	\label{eq:shadow_LPFSTS}
\end{equation}
where $\hat{H}$ is the true Hamiltonian and $\tilde{H}_{LPF}$ is the actual effect of the integrator, and the second order minimal norm space-time-space scheme \eqref{eq:2MNSTS} has error term
\begin{IEEEeqnarray}{ll}
	\tilde{H}_{2MN} - \hat{H} =
	h^2 \Bigg(& \frac{6\lambda^2 - 6\lambda + 1}{12} [\hat{S}, [\hat{S},\hat{T}]] \IEEEnonumber \\
	& +\: \frac{1 - 6\lambda}{24} [\hat{T},[\hat{S},\hat{T}]] \Bigg) + \mathcal{O}(h^4). \label{eq:shadow_2MNSTS}
\end{IEEEeqnarray}
Such error terms are usually calculated by recursively applying the Baker--Campbell--Hausdorff formula for a symmetric product
\begin{equation}
	\ln (e^{hA} e^{hB} e^{hA}) = h(2A + B) - \frac{h^3}{6} \big( [B,[A,B]] + [A,[A,B]] \big) + \mathcal{O}(h^5), \label{eq:BCH_symm}
\end{equation}
from the centre of a symmetric scheme.

Let us consider the general case of a step in this expansion for a generalized multi-scale integrator with Hamiltonian $\hat{H} = \hat{T} + \sum_i \hat{S}_i$, writing $\hat{T}[a] = e^{ah\hat{T}}$ for this section only. This comes in two flavours:
\begin{IEEEeqnarray}{l}
	e^{\alpha h \hat{T}} \exp \left[ \beta h \hat{T} + \sum_i \gamma_i h \hat{S}_i \right] e^{\alpha \hat{T}} \IEEEyesnumber \IEEEyessubnumber \label{eq:time_BCH}
\end{IEEEeqnarray}
and
\begin{IEEEeqnarray}{l}
	e^{\delta_i h \hat{S}_i} \exp \left[\beta h \hat{T} + \sum_i \gamma_i h \hat{S}_i \right] e^{\delta_i h \hat{S}_i}. \IEEEyessubnumber \label{eq:space_BCH}
\end{IEEEeqnarray}
Using \eqref{eq:BCH_symm}, these expand to
\begin{IEEEeqnarray*}{ll}
\exp \Bigg[& (2\alpha + \beta) h \hat{T} + \sum_i \gamma_i h \hat{S}_i
+ \frac{\alpha (\alpha + \beta) h^3}{6} \sum_i \gamma_i [\hat{T}, [\hat{S_i}, \hat{T}]] \\
& +\: \frac{\alpha h^3}{6} \sum_i \gamma_i^2  [\hat{S}_i, [\hat{S}_i, \hat{T}]] \\
& +\: \frac{\alpha h^3}{3} \sum_i \sum_{j>i} \gamma_i \gamma_j  [\hat{S}_i, [\hat{S}_j, \hat{T}]]
+ \mathcal{O}(h^5)
\Bigg]
\end{IEEEeqnarray*}
and
\begin{IEEEeqnarray*}{ll}
\exp \Bigg[& \beta h \hat{T} + \sum_{j \neq i} \gamma_j h \hat{S}_j +
 (\gamma_i + 2\delta_i) \hat{S}_i
- \frac{\beta^2 \delta_i h^3}{6} [\hat{T}, [\hat{S}_i, \hat{T}]] \\
& -\: \frac{\beta \delta_i (\delta_i + \gamma_i) h^3}{6} [\hat{S}_i, [\hat{S}_i, \hat{T}]] \\
& -\: \frac{\beta \delta_i h^3}{6} \sum_{j \neq i} \gamma_j [\hat{S}_i, [\hat{S}_j, \hat{T}]]
+ \mathcal{O}(h^5)
\Bigg].
\end{IEEEeqnarray*}

Note that for any given $i$, the coefficients for $[\hat{T}, [\hat{S}_i, \hat{T}]]$ and $[\hat{S}_i, [\hat{S}_i, \hat{T}]]$ only involve the coefficients for $\hat{T}$ and $\hat{S}_i$ from the initial expressions \eqref{eq:time_BCH} and \eqref{eq:space_BCH}.
Hence, the resulting coefficients for these terms when expanding a full scheme \emph{must} be the same as what would result with only $\hat{T}$ and $\hat{S}_i$ steps.
In the case of a generalized multi-scale scheme, the scheme's construction \eqref{eq:uber_gen} thus ensures that the coefficients of these terms are identical to the ones for the composite integrators, for example \eqref{eq:shadow_2MNSTS}.

The only new terms are the cross terms $[\hat{S}_i, [\hat{S}_j, \hat{T}]]$, $i \neq j$.
The coefficients for these terms depend on how both $\hat{S}_i$ and $\hat{S}_j$ are integrated, and hence the cross terms typically reduce the benefit gained by placing one action term or the other on a finer time-scale.

As an example, the integration scheme described in \ref{sec:uber_gen_alg_demo} has error term
\begin{IEEEeqnarray*}{rCll}
\tilde{H} - \hat{H}
& = & h^2 \Bigg(&
\frac{1}{108} [\hat{S}_1, [\hat{S}_1, \hat{T}]]
+ \frac{1 - 6 \lambda}{24} [\hat{S}_2, [\hat{S}_2, \hat{T}]] \\
&&& +\: \frac{1}{216} [\hat{T}, [\hat{S}_1, \hat{T}]] 
+ \frac{6 \lambda^2- 6 \lambda + 1}{12} [\hat{T}, [\hat{S}_2, \hat{T}]] \\
&&& +\: \frac{1 - 8 \lambda}{36} [\hat{S}_1, [\hat{S}_2, \hat{T}]]
\Bigg)
+ \mathcal{O}(h^4).
\end{IEEEeqnarray*}

\section{Force terms} \label{app:force_terms}
\tikzsetfigurename{figure_C.}

This appendix describes the force terms for a variety of fermion actions in order to show how they could be implemented in code.

\subsection{Basic HMC}
The force term for the basic fermion action
\begin{equation}
	S_F = \phi^\dag K^{-1} \phi
\end{equation}
is
\begin{equation}
	F = \pdiff{S_F}{U} = -\phi^\dag K^{-1} \pdiff{K}{U} K^{-1} \phi.
\end{equation}

The form of $\pdiff{K}{U}$ is dependent on the choice of fermion matrix $K$, e.g.\ Wilson or Clover.

\subsection{Polynomial-filtered HMC}
Consider the 1-filter PFHMC action
\begin{equation}
	S_{\mathrm{1pf}} = \phi_1^\dag P(K) \phi + \phi_2^\dag [P(K)K]^{-1} \phi_2.
\end{equation}
Given a polynomial $P(K)$ in the form
\begin{equation*}
	P(K) = c_n \prod_{i=1}^{n} (K - z_i),
\end{equation*}
we can write the polynomial force term as
\begin{IEEEeqnarray*}{rCl}
	F_1 & = & \pdiff{S_1}{U} \\
& = & \phi_1^\dag
\pdiff{}{U} \left[ c_n \prod_{i=1}^{n} (K - z_i) \right]
\phi_1 \\
& = & \phi_1^\dag \sum_{i=1}^{n} \left[ c_n \prod_{j=1}^{i-1}(K - z_j) \pdiff{K}{U} \prod_{j=i+1}^{n} (K - z_j) \right] \phi_1 \\
& = & \sum_{i=1}^{n} \eta_i^\dag \pdiff{K}{U} \chi_i, \IEEEyesnumber \label{eq:force_poly}
\end{IEEEeqnarray*}
where
\begin{IEEEeqnarray}{rCl}
	\chi_i & = & c_n \prod_{j=i+1}^{n} (K - z_j) \phi_1, \IEEEyesnumber\IEEEyessubnumber* \\
	\eta_i & = & \prod_{j=1}^{i-1} (K - z_j^*) \phi_1.
\end{IEEEeqnarray}
If we construct these intermediate fields incrementally,
the calculation of $F_1$ only requires $(2n - 2)$ matrix multiplications in addition to the ones required to calculate $\pdiff{K}{U}$.

As for the correction term $F_2$, we have
\begin{IEEEeqnarray*}{rCl}
	F_2 & = & \pdiff{S_2}{U} \\
& = & \phi_2^\dag \pdiff{}{U} \left[ c_n K^{-1} \prod_{i=1}^{n} (K-z_i)^{-1} \right] \phi_2,
\end{IEEEeqnarray*}
which can be simplified in a couple of ways.

We could express this force term as a function of the inverse $[KP(K)]^{-1}$ and then calculate $[KP(K)]^{-1} \phi$, but then we would require $n+1$ matrix operations per conjugate gradient iteration, which negates the performance benefit gained from $KP(K)$ being close to unity.

A better solution is to expand the inverse polynomial into a sum over poles, then use a multi-shift solver to calculate the shifted inverses $[K - z_i]^{-1} \phi$. 
The general formula for this expansion is
\begin{equation}
	\prod_{i=1}^{n} \frac{1}{K - z_i}
= \sum_{i=1}^{n} \left( \prod_{j \neq i} \frac{1}{z_j - z_i} \right) \frac{1}{K - z_i} \equiv \sum_{i=1}^{n} \frac{r_i}{K - z_i}.
\end{equation}
Taking $z_{n+1} = 0$, the force term becomes
\begin{IEEEeqnarray*}{rCl}
	F_2
& = & \phi_2^\dag \pdiff{}{U} \left[ \sum_{i=1}^{n+1} \frac{r_i}{K - z_i} \right] \phi_2 \\
& = & - \sum_{i=1}^{n+1} \phi_2^\dag  [K - z_i]^{-1} \pdiff{K}{U} r_i[K - z_i]^{-1} \phi_2 \\
& = & - \sum_{i=1}^{n+1} \bar{\eta}_i^\dag \pdiff{K}{U} \bar{\chi}_i, \IEEEyesnumber \label{eq:force_polycorr}
\end{IEEEeqnarray*}
where
\begin{IEEEeqnarray}{rCl}
	\bar{\chi}_i & = & r_i [K - z_i]^{-1} \phi_2, \IEEEyesnumber\IEEEyessubnumber* \\
	\bar{\eta}_i & = & [K - z_i^*]^{-1} \phi_2,
\end{IEEEeqnarray}
and
\begin{equation}
	r_i = \prod_{j=1, j \neq i}^{n+1} \frac{1}{z_j - z_i}.
\end{equation}

As $P(K)$ must be a real polynomial to avoid the sign problem,
the roots $z_i$ are either real or come in complex-conjugate pairs. Hence, for each $z_i$ there is some $z_j$ such that $z_i^* = z_j$, so we only need to construct $n+1$ shifted inverses.

\subsection{Mass preconditioning}
Consider a mass preconditioned system
\begin{equation}
	S_{\mathrm{MP}} = \phi_1^\dag J^{-1} \phi_1 + \phi_2^\dag JK^{-1} \phi_2.
\end{equation}
The force term for the heaviest fermion $F_1$ is identical
to that for basic HMC, namely
\begin{equation}
	F_1 = -\phi_1^\dag J^{-1} \pdiff{J}{U} J^{-1} \phi_1. \label{eq:force_hasen}
\end{equation}

As for the correction term, we have
\begin{IEEEeqnarray*}{rCl}
	F_2
& = & \pdiff{}{U} \left[ \phi_2^\dag JK^{-1} \phi_2 \right] \\
& = & \phi_2^\dag \pdiff{J}{U} K^{-1} \phi_2 - \phi_2^\dag J K^{-1} \pdiff{K}{U} K^{-1} \phi_2.
\end{IEEEeqnarray*}
To make this look more symmetric, we expand $K = M^\dag M$ and $J = W^\dag W$ and write
\begin{equation}
F_2 = \phi_2^\dag M^{-1} \pdiff{J}{U} (M^\dag)^{-1} \phi_2
- \phi_2^\dag W^\dag K^{-1} \pdiff{K}{U} K^{-1} W \phi_2 \label{eq:force_hasencorr}
\end{equation}

\subsection{Multiple filters}
For actions with multiple polynomial and/or mass filters, each force term takes the form of one of the previously mentioned force terms:
\eqref{eq:force_poly},  \eqref{eq:force_polycorr}, \eqref{eq:force_hasen} or \eqref{eq:force_hasencorr}.

For example, the force terms for the PF-MP action
\begin{equation*}
	S_{PF-MP}
= \phi_1^\dag P(J) \phi_1 + \phi_2^\dag [P(J)J]^{-1} \phi_2
	+ \phi_3^\dag JK^{-1} \phi_3
\end{equation*}
are
\begin{IEEEeqnarray*}{rClCl}
 F_1 & = & \phi_1^\dag \sum_{i=1}^{n} \left[ c_n \prod_{j=1}^{i-1}(J - z_j) \pdiff{J}{U} \prod_{j=i+1}^{n} (J - z_j) \right] \phi_1,
	& \sim & \eqref{eq:force_poly} \\
 F_2 & = & - \sum_{i=1}^{n+1} \phi_2^\dag  [J - z_i]^{-1} \pdiff{J}{U} r_i[J - z_i]^{-1} \phi_2,
	& \sim & \eqref{eq:force_polycorr} \\
\noalign{\noindent and\vspace{2\jot}}
 F_3 & = & \phi_3^\dag M^{-1} \pdiff{J}{U} (M^\dag)^{-1} \phi_3
- \phi_3^\dag W^\dag K^{-1} \pdiff{K}{U} K^{-1} W \phi_3.
	& \sim & \eqref{eq:force_hasencorr}
\end{IEEEeqnarray*}

\section*{References}
\bibliographystyle{elsarticle-num}
\bibliography{references}

\end{document}